  \providecommand\BibTeX{{%
    \normalfont B\kern-0.5em{\scshape i\kern-0.25em b}\kern-0.8em\TeX}}}
\begin{document}

\title{Thermodynamic Perspectives on Computational Complexity: Exploring the P vs. NP Problem}

\author{Florian Neukart}
\email{f.neukart@liacs.leidenuniv.nl}
\orcid{0000-0002-2562-1618}
\affiliation{%
  \institution{Leiden Institute of Advanced Computer Science}
  \streetaddress{Niels Bohrweg 1}
  \city{Leiden}
  \state{South Holland}
  \country{Netherlands}
  \postcode{ 2333 CA}
}








\renewcommand{\shortauthors}{Neukart}

\begin{abstract}
The resolution of the P vs. NP problem, a cornerstone in computational theory, remains elusive despite extensive exploration through mathematical logic and algorithmic theory. This paper takes a novel approach by integrating information theory, thermodynamics, and computational complexity, offering a comprehensive landscape of interdisciplinary study. We focus on entropy, a concept traditionally linked with uncertainty and disorder, and reinterpret it to assess the complexity of computational problems. Our research presents a structured framework for establishing entropy profiles within computational tasks, enabling a clear distinction between P and NP-classified problems. This framework quantifies the 'information cost' associated with these problem categories, highlighting their intrinsic computational complexity. We introduce Entropy-Driven Annealing (EDA) as a new method to decipher the energy landscapes of computational problems, focusing on the unique characteristics of NP problems. This method proposes a differential thermodynamic profile for NP problems in contrast to P problems and explores potential thermodynamic routes for finding polynomial-time solutions to NP challenges. Our introduction of EDA and its application to complex computational problems like the Boolean satisfiability problem (SAT) and protein-DNA complexes suggests a potential pathway toward unraveling the intricacies of the P vs. NP problem.
\end{abstract}


\keywords{P vs. NP, Information Theory, Thermodynamics, Computational Complexity, Entropy, Non-equilibrium Thermodynamics, Quantum Information, Biological Computation, Energy Landscapes, Interdisciplinary Approach, Computational Hardness, Landauer's Principle}


\maketitle

\section{Introduction}

Computational complexity theory, a subdomain of theoretical computer science, seeks to understand the inherent difficulty of computational problems and categorize them based on their complexity. One of the most profound and longstanding open problems in this domain is the P vs. NP problem. 

\subsection{Background on the P vs. NP Problem}
The classes P and NP, introduced by Stephen Cook and later expanded upon by Richard Karp \cite{Cook1971, Karp1972}, distinguish between problems for which solutions can be found quickly (in polynomial time) and those for which solutions can be merely verified quickly. Formally, P is the class of decision problems for which a solution can be found in polynomial time. In contrast, NP consists of decision problems for which a solution can be verified in polynomial time. The essence of the P vs. NP question is whether every problem for which a solution can be quickly verified (NP) can also have its solution found quickly (P) \cite{Arora2009}.

\subsection{Importance of the Problem in Computational Theory}
The P vs. NP problem holds tremendous significance for computer scientists and physicists, mathematicians, and economists, among other disciplines. If \( P = NP \), many problems considered computationally hard (and often labeled as NP-complete) would suddenly become tractable. This shift would have profound implications across diverse fields, from cryptography to operations research \cite{Levin2012}. Conversely, if \( P \neq NP \), it would reinforce the notion of intrinsic computational difficulty and provide a foundational understanding of the limits of algorithmic efficiency.

\subsection{Overview of Information Theory and Thermodynamics}
Information theory, pioneered by Claude Shannon \cite{Shannon1948}, describes the quantification of information and has been instrumental in shaping modern communication systems. One of its core concepts is entropy, which measures the uncertainty or randomness associated with information sources. Negentropy, or negative entropy, represents information order and system organization, contrasting with entropy's focus on disorder \cite{Brillouin1951, Zivieri2022}.

Thermodynamics, a branch of physics, revolves around the relationships between heat, work, and energy. Its second law introduces the concept of entropy, denoting the amount of energy in a closed system that is unavailable to do work. The increasing entropy principle dictates the directionality of energy transfer processes and has profound implications for understanding the nature of time and the evolution of systems \cite{Carnot1824, Clausius1850}. Landauer's limit, a principle in the thermodynamics of computation, establishes a minimum possible energy required to erase one bit of information, linking information theory to physical processes \cite{Landauer1961}.

\subsection{Rationale for Merging These Domains to Approach the P vs. NP Conundrum}
Given the analogous utilization of entropy in both fields and their underlying principles, an interdisciplinary investigation could provide fresh insights into the P vs. NP problem. By examining computational problems through the lens of information theory and thermodynamics, we might uncover new relationships, constraints, or potential solutions that have remained elusive within the confines of pure computational theory. This approach aligns with recent studies exploring the thermodynamic costs of computation and the analogy between computational and physical systems \cite{Zivieri2022, Jaynes1957}.

\section{Background}
Here, we discuss computational complexity and information theory, framing the discussion in the context of P, NP, and NP-complete problems. It also explores the intricacies of entropy, mutual information, and the thermodynamic analogy in computational systems, offering a comprehensive overview of these critical concepts in computer science and physics. This sets the stage for a deeper understanding of the complexities and challenges inherent in computational problems and their potential solutions.

\subsection{Classical Computational Complexity}

Computational complexity theory, rooted in computer science and mathematics, categorizes problems based on the resources (like time or space) required to solve them. A foundational division within this theory separates problems into classes that characterize their inherent difficulty, most notably, the classes P, NP, and NP-complete.

\subsubsection{Basics of P}

The class \( P \) (Polynomial Time) encompasses decision problems that can be solved in time polynomial to the input size. Formally, a decision problem belongs to \( P \) if there exists an algorithm that can determine the answer (either 'yes' or 'no') in \( O(n^k) \) time, where \( n \) is the size of the input and \( k \) is a constant. This class is fundamental because polynomial-time algorithms are generally considered efficiently solvable \cite{Hopcroft1979}.

\begin{equation}
    T(n) = O(n^k)
    \label{eq:poly_time}
\end{equation}

\subsubsection{Basics of NP}

The class \( NP \) (Nondeterministic Polynomial Time) includes decision problems for which, given a purported solution (or certificate), the correctness of that solution can be verified in polynomial time. While finding a solution might take an exponential amount of time, its correctness can be checked in polynomial time. Formally, a decision problem is in \( NP \) if a polynomial-time verifier exists for any given solution \cite{Garey1979}.

\begin{equation}
    V(x, y) \text{ runs in } O(n^k) \text{ time, where } x \text{ is the input and } y \text{ is the certificate}
    \label{eq:np_time}
\end{equation}

\subsubsection{NP-complete Problems}

A critical concept within \( NP \) is the notion of NP-completeness. An NP-complete problem is both in \( NP \) and is as "hard" as any problem in \( NP \). More formally, a problem \( A \) is NP-complete if:

\begin{enumerate}
    \item \( A \) is in \( NP \).
    \item Every problem in \( NP \) is polynomial-time reducible to \( A \) \cite{Karp1972}.
\end{enumerate}

If any NP-complete problem can be solved in polynomial time, then every problem in \( NP \) can also be solved in polynomial time, implying \( P = NP \).

\begin{equation}
    A \text{ is NP-complete if } B \leq_p A \text{ for all } B \text{ in } NP
    \label{eq:np_complete}
\end{equation}

Where \( \leq_p \) denotes polynomial-time reducibility.

\subsection{Fundamentals of Information Theory}

Information theory, developed primarily by Claude Shannon in the mid-20th century, provides a rigorous mathematical framework to study and quantify information. At the core of this theory are principles that shape modern digital communications and provide deep insights into various domains ranging from statistical mechanics to data compression.

\subsubsection{Entropy}

Entropy, denoted by \( H(X) \), measures the unpredictability or uncertainty of a random variable \( X \). In other words, it quantifies the amount of surprise or information content associated with the outcomes of \( X \). For a discrete random variable \( X \) with probability mass function \( p(x) \), the entropy \( H(X) \) is defined as:

\begin{equation}
    H(X) = -\sum_{x \in X} p(x) \log_2 p(x)
    \label{eq:entropy}
\end{equation}

The base-2 logarithm yields entropy in bits. If the outcomes are all equally likely, the entropy is maximized, indicating the highest uncertainty.

\subsubsection{Mutual Information}

Mutual information measures the amount of information shared between two random variables. Intuitively, it quantifies how much knowing the outcome of one variable reduces the uncertainty about the other. For two discrete random variables \( X \) and \( Y \) with joint probability mass function \( p(x, y) \) and marginal probability mass functions \( p(x) \) and \( p(y) \), the mutual information \( I(X; Y) \) is given by:

\begin{equation}
    I(X; Y) = \sum_{x \in X} \sum_{y \in Y} p(x, y) \log_2 \left( \frac{p(x, y)}{p(x) p(y)} \right)
    \label{eq:mutual_info}
\end{equation}

When \( I(X; Y) = 0 \), the variables are independent, and knowing one provides no information about the other.

\subsubsection{Shannon's Theorems}

Shannon introduced several groundbreaking theorems that lay the foundations of information theory. Two of his most influential theorems are:

\begin{itemize}
    \item \textbf{Shannon's Source Coding Theorem}: This theorem states that a source of information can be compressed to its entropy rate, \( H(X) \), without loss. Any compression beyond this rate would entail a loss of information \cite{Shannon1948}.
    
    \item \textbf{Shannon's Channel Coding Theorem}: It establishes the capacity of a noisy communication channel and indicates that reliable communication is possible as long as the transmission rate is below this capacity. Moreover, errors can be made arbitrarily small with suitable coding techniques, provided the rate is below the channel capacity \cite{Shannon1948}.
\end{itemize}

These theorems fundamentally altered the communication landscape and set the path for modern digital communication systems.

\subsubsection{Negentropy and the Landauer Limit}
Negentropy, or negative entropy, is a concept in information theory that represents a measure of order or predictability. It is often viewed as the opposite of entropy, indicating highly organized systems or low levels of randomness. In the context of computational complexity, negentropy can be thought of as the degree of structure or predictability in a computational problem, potentially impacting the problem's complexity.

The Landauer limit, proposed by Rolf Landauer, establishes a minimum possible energy required to erase one bit of information, thereby linking information processing to physical energy dissipation. This concept is crucial in understanding the physical limits of computation and has implications for the energy efficiency of computational processes, especially relevant in the study of P vs. NP problems \cite{Landauer1961}.

\subsection{Basics of Thermodynamics}

Thermodynamics is a branch of physics that investigates the principles governing energy and matter interactions. The laws of thermodynamics outline the inherent constraints on these interactions, providing a foundation for understanding a broad spectrum of physical phenomena. Notably, entropy in thermodynamics, much like its counterpart in information theory, encapsulates disorder and uncertainty.

\subsubsection{Laws of Thermodynamics}

The fundamental laws of thermodynamics can be summarized as follows:

\begin{itemize}
    \item \textbf{Zeroth Law of Thermodynamics}: If two systems are each in thermal equilibrium with a third system, they are in thermal equilibrium. This law forms the basis of temperature measurement.
    
    \item \textbf{First Law of Thermodynamics (Conservation of Energy)}: Energy cannot be created or destroyed, only transformed or transferred. Mathematically, the change in internal energy (\( \Delta U \)) of a closed system is equal to the heat (\( Q \)) added to the system minus the work (\( W \)) done by the system on its surroundings:

    \begin{equation}
        \Delta U = Q - W
        \label{eq:first_law}
    \end{equation}
    
    \item \textbf{Second Law of Thermodynamics}: The total entropy of an isolated system can never decrease over time; it remains constant in ideal processes (reversible) and increases in real processes (irreversible). In essence, natural processes tend to increase the overall disorder or entropy.
    
    \item \textbf{Third Law of Thermodynamics}: As the temperature of a system approaches absolute zero (0 Kelvin), the system's entropy approaches a minimum, typically zero for perfect crystals at this temperature.
\end{itemize}

\subsubsection{Entropy in Thermodynamics}

Thermodynamic entropy, represented as \( S \), measures a system's degree of randomness or disorder. For an infinitesimal reversible heat transfer (\( \delta Q_{rev} \)) at a constant temperature \( T \), the change in entropy \( \delta S \) is given by:

\begin{equation}
    \delta S = \frac{\delta Q_{rev}}{T}
    \label{eq:thermo_entropy}
\end{equation}

This equation provides a quantitative means to assess the change in entropy during various thermodynamic processes.

\subsubsection{Maxwell's Demon}

Introduced by James Clerk Maxwell in the 19th century, Maxwell's demon is a thought experiment that seemingly challenges the second law of thermodynamics. The demon controls a small door between two chambers of gas molecules. By selectively allowing faster-moving molecules to pass from one chamber to the other and slower molecules in the opposite direction, the demon can create a temperature difference between the chambers without doing work, seemingly violating the second law.

However, subsequent analyses, notably those by Seifert \cite{Seifert2012}, have reconciled the paradox, particularly in information theory. When considering the information processing and erasure required by the demon's actions, it becomes evident that there's an associated entropy cost, ensuring compliance with the second law of thermodynamics \cite{Szilard1929, Bennett1982}.

\subsubsection{Thermodynamic Analogy of Computational Systems}
The analogy between thermodynamics and computational systems offers a novel perspective in understanding the complexity of computational problems. In this analogy, finding a solution to a computational problem can be likened to finding a state of minimal energy in a thermodynamic system. This perspective provides a framework for exploring the computational challenges within the entropy landscapes of these problems and offers insights into the thermodynamic costs associated with solving them.

\subsubsection{Relation between Thermodynamics and Computational Complexity}
Recent studies have explored the intersection of thermodynamics and computational complexity. Zivieri's work on applying the negentropy principle and Landauer's limit to magnetic skyrmions offers insights into the physical limitations of computational systems \cite{Zivieri2022}. Similarly, Jaynes' papers on information theory and statistical mechanics \cite{Jaynes1957a, Jaynes1957b} and Brillouin's work on entropy and information \cite{Brillouin1951a, Brillouin1951b, Brillouin1951c, Brillouin1954} establish a foundational relationship between information processing and thermodynamic principles. Kostic's and Koczan's recent reinterpretations of the second law of thermodynamics \cite{Kostic2023, Koczan2022} also provide a novel perspective on the constraints of computational processes. Collectively, these works emphasize the importance of understanding computational problems from a thermodynamic standpoint, especially in the context of P vs. NP and entropy landscapes.

\section{Computational Complexity through the Lens of Entropy}

The intertwining of information theory and computational complexity provides a novel perspective on classic problems. Entropy, in both its informational and thermodynamic contexts, encapsulates a notion of uncertainty or disorder. When applied to computational problems, entropy can be seen as a measure of the "hardness" or unpredictability of a problem. Drawing parallels between these domains allows us to analyze computational challenges in a new light.

\subsection{Entropy as Computational Unpredictability}

While rooted in information theory and thermodynamics, entropy can be transposed into computational complexity as a measure of unpredictability or complexity inherent in computational problems. Unlike probability-based classical entropy, computational entropy encapsulates the diversity and intricacy of the solution space. High entropy in computational problems, especially those classified as NP, indicates higher complexity and unpredictability. This is contrasted with P problems, where the solution space is more deterministic and less diverse. In the context of the P vs. NP question, computational entropy can be considered a reflection of the inherent complexity in verifying a solution for NP problems versus finding one for P problems. This leads us to consider whether the computational entropy for NP problems is inherently higher than for P problems, raising the question:

\begin{equation}
    H_{comp}(P) \leq H_{comp}(NP) \, ?
    \label{eq:comp_entropy}
\end{equation}

Here, \( H_{comp} \) represents the computational entropy, capturing the inherent unpredictability or resource demands of a problem. This approach reframes the P vs. NP problem in terms of computational entropy, offering a new perspective on the inherent difficulty of these problem classes.

\subsection{Thermodynamic Analogy of Computation}

The thermodynamic analogy in computation offers a conceptual framework where computational processes are likened to physical systems undergoing energy transformations. This perspective enables the exploration of computational complexity through the lens of thermodynamic principles, providing insights into computational tasks' energy dynamics and efficiency. When computation is viewed as a thermodynamic process, the act of computation itself could be seen as analogous to the increase of entropy in a physical system. This analogy parallels the entropy increase due to energy dispersion in thermodynamic systems and the increase in computational complexity due to resource allocation in computational tasks. In thermodynamics, processes that increase entropy are naturally favored. Similarly, processes requiring less computational resources (or lower computational entropy) are more efficient and thus preferred in computational systems. This perspective makes us consider that the complexity of solving certain computational problems, particularly those in the NP class, might correspond to the thermodynamic cost or the entropy increase associated with those problems. This thermodynamic interpretation of computational complexity also suggests that as energy minimization is a driving force in physical systems, computational efficiency (or minimization of computational entropy) could be considered a fundamental principle in computational systems. The exploration of this analogy further strengthens the understanding that computational problems, much like physical systems, may have inherent 'energy' landscapes characterized by different levels of computational entropy. This analogy enriches our understanding of computational complexity, drawing parallels between the physical world's energy dynamics and the abstract world of algorithms and computations.

\subsection{Entropy and NP-complete Problems}
NP-complete problems, recognized as some of the most challenging problems within the NP class, are characterized by their unique entropy properties. These problems are verifiable in polynomial time and exhibit high computational entropy, reflecting the complexity and vastness of their solution spaces.

High entropy in NP-complete problems suggests an intricate solution landscape, with numerous local optima and a wide range of potential solutions. This complexity contributes to the inherent difficulty in finding a solution for these problems. Furthermore, the high entropy profile of NP-complete problems provides insights into why they resist efficient deterministic polynomial-time algorithms. A high-entropy landscape implies that even heuristic approaches may face challenges due to the potential of getting trapped in local optima.

Exploring the concept of entropy in the context of NP-complete problems underscores the possibility of novel approaches to understanding and tackling these problems. By analyzing the entropy characteristics of NP-complete problems, researchers can potentially develop new strategies for algorithm design or identify previously unseen connections between different NP-complete problems.

\subsection{Rethinking Heuristics and Approximation Algorithms}
The concept of entropy in computational problems, particularly within the NP class, necessitates a reevaluation of heuristics and approximation algorithms. Traditionally used for practical solutions in complex scenarios, these algorithms can greatly benefit from an entropy-informed design. By understanding the entropy profile of a problem, strategies can be tailored to more effectively navigate through high-entropy regions indicative of complex and unpredictable solution spaces. This approach can lead to developing heuristics that are better equipped to handle the diversity and intricacies of the solution space, increasing the likelihood of finding satisfactory solutions in landscapes where traditional methods might struggle due to numerous local optima. Similarly, approximation algorithms can leverage entropy profiles to target regions more likely to contain near-optimal solutions, efficiently managing the trade-off between exploration and exploitation in high-entropy landscapes.

\subsection{Inherent Entropy of NP Problems}
In considering the inherent entropy of NP problems, a comparative analysis with P-class problems provides valuable insights. This analysis points out the distinct nature of entropy profiles between these two classes, highlighting the greater complexity and unpredictability inherent in NP problems. While P problems exhibit more deterministic and concentrated entropy profiles, NP problems, especially NP-complete ones, display a dispersed probability distribution over their solution space, suggesting a higher and more complex entropy profile. This distinction in entropy profiles is key to understanding the computational intensity of NP problems and their sensitivity to input size and initial conditions. High entropy in NP problems signifies a broad and deep search space and a landscape riddled with numerous local optima, making them more challenging to solve. This exploration into the entropy characteristics of NP problems offers a new perspective on their computational complexity and presents an opportunity to develop novel problem-solving strategies.

\subsubsection{Entropy Profiles of Problems}

The entropy profile of a computational problem can be viewed as a measure of the problem's inherent complexity and uncertainty. It reflects the spread and likelihood of potential solutions, offering insights into the computational effort required for problem-solving.

\begin{equation}
    H_{prof}(C) = - \sum_{s \in S(C)} p(s) \log p(s)
    \label{eq:entropy_profile_extended}
\end{equation}

Here, \( S(C) \) denotes the set of possible solutions for problem \( C \), and \( p(s) \) is the probability of solution \( s \) being the correct solution. This formulation integrates the probabilistic aspects of solution discovery, emphasizing the role of uncertainty and complexity in computational problems.

\subsubsection{Comparative Analysis of P and NP Entropy Profiles}
For P problems, solutions can be reached deterministically, implying a more concentrated probability distribution in the solution space. In contrast, NP problems, particularly NP-complete ones, present a vast solution space with a more dispersed probability distribution, suggesting a higher entropy profile.

\begin{equation}
    H_{NP}(C) - H_{P}(C) > 0 \quad \text{for NP-complete problems}
    \label{eq:entropy_diff}
\end{equation}

This equation quantifies the expected higher entropy for NP problems than P problems, signifying greater computational complexity and unpredictability.

\subsubsection{Characterization of NP Problem Entropy}
To characterize the entropy of NP problems, we consider factors like the depth and breadth of the search space, represented as a function of input size \(n\). The complexity of the solution landscape is modeled using a ruggedness factor, \(\kappa\), indicating the number of local optima.

\begin{equation}
    H_{NP}(C) = f(n, \kappa) \quad \text{where } f \text{ is a complexity function}
    \label{eq:np_entropy_model}
\end{equation}

This model reflects the increased computational intensity and sensitivity to initial conditions for NP problems, with higher values of \(\kappa\) indicating more complex landscapes.

\subsubsection{Distinguishing Characteristics of NP Entropy}

There are several potential distinguishing features of the entropy associated with NP problems:

\begin{itemize}
    \item \textbf{Depth of Search}: The higher the entropy, the deeper and broader the search required to locate a feasible solution, making the problem computationally intensive.
    
    \item \textbf{Solution Landscape}: High entropy might imply a rugged solution landscape with many local optima. This characteristic can make heuristic searches challenging, as they might get stuck in local optima.
    
    \item \textbf{Sensitivity to Initial Conditions}: Problems with high entropy might be highly sensitive to initial conditions or input changes, leading to vastly different solution spaces for minor input perturbations.
\end{itemize}

\subsubsection{Implications for the P vs. NP Debate}
Exploring entropy in NP problems leads to a hypothesis: the entropy profile of NP problems is fundamentally more complex than that of P problems. This hypothesis can be tested by examining the entropy growth rate with respect to input size \(n\). A polynomial growth rate in P problems versus a super-polynomial or exponential growth rate in NP problems would provide a quantitative basis for this distinction.

\begin{equation}
    \lim_{n \to \infty} \frac{H_{NP}(C)}{H_{P}(C)} \to \infty
    \label{eq:entropy_growth_rate}
\end{equation}

This limit illustrates the theoretical foundation of the intrinsic difference in entropy profiles between P and NP problems, offering a new dimension to the P vs. NP debate.

\subsection{Mathematical Formulation of Entropy Profiles for Computational Problems}

Characterizing computational problems through an entropy lens necessitates a sophisticated mathematical framework. We introduce an entropy function $H(C)$ for a computational problem $C$ that aims to quantify the "information cost" or "uncertainty" intrinsic to solving it.

Given a computational problem $C$ with an associated solution space $S$, we define the entropy $H(C)$ as follows:
\begin{equation}
    H(C) = -\sum_{s \in S} p(s) \log p(s)
    \label{eq:entropy_formula}
\end{equation}
Here, $p(s)$ represents the probability of a solution $s$ being the correct solution for the problem $C$. This formulation is grounded in the principles of information theory, where entropy is a measure of uncertainty or unpredictability.

\begin{itemize}
    \item \textbf{Entropy for Class P Problems:} Solutions can be deterministically identified for problems in class P in polynomial time. This deterministic nature translates into a concentrated probability distribution in the solution space, where one solution has a probability of 1, and all others have a probability of 0. Consequently, the entropy collapses to:
    \begin{equation}
        H_{P}(C) = 0
        \label{eq:entropy_p}
    \end{equation}
    This zero entropy signifies the lack of uncertainty in reaching the solution, reflecting the tractability of P problems.
    
    \item \textbf{Entropy for Class NP Problems:} The situation is more complex for NP problems, especially NP-complete ones. These problems are characterized by a vast solution space with no known polynomial-time deterministic methods to identify the correct solution. Therefore, the probability distribution over the solution space is not concentrated, leading to a non-zero entropy value:
    \begin{equation}
        H_{NP}(C) > 0
        \label{eq:entropy_np}
    \end{equation}
    This positive entropy reflects the uncertainty and computational challenge inherent in NP problems.
    
    \item \textbf{Detailed Probabilistic Model for NP Problems:} To further refine the entropy model for NP problems, we consider the distribution of probabilities over the solution space. We introduce a parameter $\alpha$, representing the distribution's spread or dispersion. A higher $\alpha$ indicates a more dispersed distribution, typical for harder NP problems:
    \begin{equation}
        H_{NP}(C, \alpha) = -\sum_{s \in S} p_{\alpha}(s) \log p_{\alpha}(s)
        \label{eq:entropy_np_alpha}
    \end{equation}
    Here, $p_{\alpha}(s)$ is a probability function that varies with $\alpha$, capturing the varying levels of uncertainty across different NP problems.
    
\end{itemize}

These entropy models for P and NP problems offer a mathematical basis to differentiate between them based on their inherent complexity and uncertainty. However, the practical application of these models requires further research. Key areas of exploration include developing methods to estimate probabilities in NP problem solution spaces accurately and understanding how changes in problem structure influence the entropy profiles.

\subsection{Understanding the "Information Cost" of Solving NP Problems}
In the context of NP problems, "information cost" becomes a pivotal concept, representing the entropy or uncertainty involved in solving these problems. Unlike the straightforward computational entropy, information cost reflects the added dimension of effort required to reduce this uncertainty. It combines the inherent unpredictability encapsulated in the entropy profile \( H_{prof}(X) \) with the computational process needed to make this entropy manageable. This novel aspect of information cost, grounded in entropy, differentiates it from other computational resources and offers a new lens to examine the intricacies of NP problems.

\subsubsection{Quantifying Information Cost}

To deeply understand information cost, \( C(X) \), for a computational problem \( X \), we consider not only the entropy profile \( H_{prof}(X) \) but also the effort needed to reduce this entropy to a tractable level. Thus, the information cost is defined as:

\begin{equation}
    C(X) = H_{prof}(X) - H_{reduced}(X)
    \label{eq:info_cost_refined}
\end{equation}

Here, \( H_{reduced}(X) \) is the entropy after applying a computational process, signifying the reduction in uncertainty or complexity.

\subsubsection{Analytical Model for Information Cost in P Problems}

The reduction in entropy is significant for P-class problems, where solutions are deterministically attainable in polynomial time. We introduce an analytical model to capture this reduction:

\begin{equation}
    H_{reduced}(P) = \text{Poly}(n) - H_{prof}(P)
    \label{eq:reduced_entropy_p}
\end{equation}

In this model, \(\text{Poly}(n)\) represents a polynomial function of the input size \( n \), reflecting the computational effort in reducing entropy.

\subsubsection{Advanced Estimation of Information Cost in NP Problems}

NP problems, particularly NP-complete ones, pose a higher information cost due to their complex entropy profile. We propose a stochastic model to estimate this cost:

\begin{equation}
    C_{NP}(X) = \int H_{prof}(X, \omega) \, d\omega - H_{initial}(X)
    \label{eq:info_cost_np}
\end{equation}

Here, \( H_{prof}(X, \omega) \) is a probability-weighted entropy over different scenarios or instances (\(\omega\)) of problem \( X \). This model reflects the inherent variability and unpredictability in solving NP problems.

\subsubsection{Implications for Algorithm Design and Problem-Solving}

The refined understanding of information cost in computational problems can influence the design of algorithms, particularly for NP problems. By quantifying the information cost, we can identify strategies that effectively reduce entropy, leading to more efficient algorithms or heuristic approaches. Additionally, this perspective can inspire novel methods to tackle computational problems, bridging the gap between abstract complexity theory and practical computational strategies.

\subsection{Quantifying Computational Entropy}

Characterizing computational problems through an entropy framework requires a sophisticated approach to quantify entropy, aiming to distinguish between P and NP class problems based on entropy profiles.

\subsubsection{Defining the Solution Space, \(S\)}
For a computational problem \(C\), the solution space \(S\) consists of all conceivable solutions. This space can vary in complexity based on the nature of \(C\), ranging from finite and straightforward for some P-class problems to potentially infinite and highly complex for NP problems. Accurately defining \(S\) is crucial for further entropy analysis.

\begin{equation}
    S(C) = \{s_1, s_2, \ldots, s_n\}
    \label{eq:solution_space}
\end{equation}

\subsubsection{Advanced Probability Distribution Modeling}
The probability distribution over \(S\) is key to understanding the computational entropy of \(C\). This distribution is not always straightforward, especially for NP problems, where solutions may not be evident, or the probability of correctness can vary widely. To address this, we introduce a probabilistic model \(P\) that captures the likelihood of each potential solution being correct, even under conditions of uncertainty or incomplete information.

\begin{equation}
    P(C) = \{p(s_1), p(s_2), \ldots, p(s_n)\}
    \label{eq:probability_distribution}
\end{equation}

\subsubsection{Rigorous Entropy Computation}
Utilizing \(S\) and \(P\), the computational entropy \(H(C)\) is given by a summation over the solution space, weighted by the probability distribution:

\begin{equation}
    H(C) = -\sum_{i=1}^{n} p(s_i) \log p(s_i)
    \label{eq:entropy_computation}
\end{equation}

\subsubsection{More on Entropy in Computational Complexity}
The entropy characteristic of a computational class provides profound insights into its inherent complexity, particularly emphasizing how this complexity manifests in the structure of the problem's solution space. For problems classified under P, the entropy tends to be lower, suggesting a more structured and, hence, more straightforward path to the solution. This lower entropy is indicative of a solution space that is less chaotic and more predictable, aligning with the deterministic nature of P problems where solutions can be found in polynomial time.

In contrast, NP problems, especially NP-complete ones, exhibit significantly higher entropy. This elevated entropy level mirrors the intricacy and vastness of the solution landscape, resonating with the intuitive understanding that these problems are computationally challenging. The high entropy of NP problems implies a solution space teeming with possibilities, where the path to the solution is not straightforward and is riddled with potential dead-ends and false leads. Understanding these entropy characteristics has profound theoretical implications. It paves the way for a more nuanced understanding of computational landscapes, transcending beyond the traditional categorizations. In practical terms, this understanding encourages the development of novel computational strategies. For instance, entropy-aware algorithms could be devised, specifically tailored to navigate the high-entropy landscapes of NP problems more effectively. These algorithms must be adept at searching, recognizing, and circumventing the complexities introduced by the high entropy.

However, this entropy-based characterization of computational complexity is not without its limitations. The models used for entropy computation are based on certain assumptions and simplifications, which might not universally hold. For instance, the probabilistic models employed might oversimplify the actual dynamics involved in the computational process. Therefore, future research should be geared towards refining these models, ensuring they capture the essence of computational problems more accurately. This involves a theoretical refinement of the entropy models and empirical validation, where the predictions of these models are rigorously tested against real-world problems. Furthermore, the exploration should not be confined to the classes of P and NP. Other computational classes, each with unique characteristics and challenges, should also be examined through entropy. This broader perspective could uncover new dimensions of computational complexity, potentially leading to breakthroughs in understanding and solving some of the most challenging problems in computer science.

\subsection{Empirical Analysis of Computational Entropy}
This section outlines an empirical approach to compute and compare the entropy profiles of selected computational problems, specifically focusing on problems classified under classes P and NP. The analysis aims to validate our theoretical constructs and offer a quantifiable perspective on the 'information cost' and complexity inherent in these problems. Notably, this section sets the stage for a deeper exploration into the Boolean Satisfiability Problem (SAT) and the protein-DNA complex problem, which are analyzed in detail in subsequent sections. These case studies exemplify the application of our entropy-driven framework to complex problems in class NP, providing concrete instances to demonstrate the practical implications of our theoretical insights.

\subsubsection{Selection of Computational Problems}
For a comprehensive analysis, we select representative problems from both P and NP classes, focusing on their distinctive computational characteristics and relevance in various applications:
\begin{itemize}
    \item For class P: Algorithm for Matrix Multiplication and Euclidean Algorithm.
    \item For class NP: SAT and Protein-DNA Complex Problem.
\end{itemize}

\subsubsection{Computing Entropy for Problems in Class P}
Problems in class P are characterized by their deterministic nature, leading to straightforward and predictable solution spaces. We compute the entropy for these problems considering their deterministic nature:

\paragraph{Algorithm for Matrix Multiplication:}
Given two matrices, the solution space for matrix multiplication is deterministic, with each element in the resultant matrix computed as a sum of products. The probability distribution \(P(C)\) for obtaining a specific matrix is deterministic, leading to an entropy value of:
\begin{equation}
    H(\text{Matrix Multiplication}) = 0,
\end{equation}
indicating a lack of uncertainty in the solution.

\paragraph{Euclidean Algorithm:}
The Euclidean Algorithm for computing the greatest common divisor (GCD) of two numbers also exhibits a deterministic solution space. The steps are well-defined and lead to a single GCD, resulting in an entropy value of:
\begin{equation}
    H(\text{Euclidean Algorithm}) = 0,
\end{equation}
reflecting the deterministic nature of the problem.

\subsubsection{Computing Entropy for Problems in Class NP}
The analysis for NP problems, including SAT and the protein-DNA complex problem, involves an intricate examination of their solution spaces and the derivation of probability distributions reflecting the complexity and uncertainty of potential solutions.

\paragraph{Boolean Satisfiability Problem:}
For the SAT problem, we consider various instances with different formula complexities. The analysis involves examining the solution space for each instance and deriving the probability distribution based on potential truth assignments. The entropy is computed as:
\begin{equation}
    H(\text{SAT}) = -\sum_{s \in S_{\text{SAT}}} p(s) \log p(s),
\end{equation}
where \(S_{\text{SAT}}\) denotes the solution space for the SAT problem and \(p(s)\) represents the probability of each solution.

\paragraph{Protein-DNA Complex Problem:}
The entropy analysis for the protein-DNA complex problem involves exploring the configurational space of molecular interactions and the associated energy levels. The probability distribution is derived from the potential configurations and their corresponding energies. Entropy is then computed to reflect the disorder and uncertainty within this complex biological system.

\subsubsection{Analysis and Results}
The empirical analysis provides insights into the entropy profiles of the selected problems, revealing:
\begin{itemize}
    \item Problems in class P exhibit low and deterministic entropy values, aligning with their predictable solution spaces.
    \item Problems in class NP, such as SAT and the protein-DNA complex problem, display higher and more varied entropy profiles, reflecting their inherent complexity and nondeterministic nature.
\end{itemize}

While this analysis enhances our understanding of the entropy landscapes for these computational problems, a broader and more detailed investigation is essential for a comprehensive comprehension. The subsequent sections of this paper will explore the practical applications of these findings, further delineating the role of entropy in computational complexity. These empirical findings lay the groundwork for the in-depth explorations and practical discussions presented in the later sections of the paper.

\subsection{Thermodynamic Analogy and Potential Exploitations}
Beyond the established parallels between computational processes and thermodynamic systems, this section explores untapped potential applications of this analogy. It examines how thermodynamic principles can inspire innovative strategies for tackling NP problems. These include leveraging concepts like energy efficiency, equilibrium states, and entropy-driven transformations and applying them to the design of computational algorithms. This fresh perspective opens up possibilities for employing thermodynamic models in algorithm development, potentially unlocking new methods to approach and solve complex computational problems.

\subsubsection{Thermal Systems as Computational Analogs}
At a high level, thermodynamic systems are analogous to computational systems. The states of a system, akin to solutions to a problem, have associated energies. If we treat computational difficulty as an "energy barrier," then finding a solution to a computational problem can be thought of as finding a state with minimal energy in a thermodynamic system.

\subsubsection{Exploiting Entropy Differences: A Theoretical Model}
We propose a theoretical model exploiting entropy differences between P and NP problems. This model hypothesizes that NP problems have a broad and complex distribution of solution states akin to high-entropy thermodynamic systems. The model involves:

\begin{enumerate}
    \item A theoretical construct of 'computational temperature' aligned with entropy levels in computational problems.
    \item An algorithmic approach that mimics thermodynamic processes, adapting strategies like simulated annealing to navigate the solution space efficiently.
\end{enumerate}

This brings us to a crucial question: Can we exploit this entropy difference to navigate the solution space of NP problems more effectively? Suppose there's a thermodynamic process that can exploit such entropy differences. In that case, it might be possible to traverse the "rugged" solution landscape of NP problems more efficiently, potentially even in polynomial time.

\subsubsection{Conceptual Mechanism: Entropy-Driven Annealing}
Inspired by the annealing process in physical systems, where a material is heated to a high temperature and then gradually cooled to remove defects, we introduce a conceptual computational mechanism: \textit{Entropy-Driven Annealing} (EDA). This mechanism is akin to simulated annealing but places a stronger emphasis on the entropy landscape of the solution space.

In this process:
\begin{enumerate}
    \item A high "computational temperature" is simulated, allowing for an extensive exploration of the solution space of the NP problem, emphasizing the high entropy state. This phase corresponds to an increased acceptance of diverse solutions, ensuring a broad search across the potential solution landscape.
    \item As the "computational temperature" gradually decreases, the search process becomes more refined, progressively focusing on solution regions with lower "computational energy" or difficulty, and correspondingly lower entropy states. This phase embodies the systematic reduction of entropy, guiding the computational process toward optimal or near-optimal solutions.
\end{enumerate}

The EDA process is fundamentally driven by the entropy profile of the NP problem, leveraging the entropy gradient as a navigational tool in the solution space. If the entropy landscape of an NP problem is conducive to such a gradient-based approach, then EDA offers a nuanced method to traverse this landscape efficiently.

To quantitatively describe this approach, we introduce a set of equations that define the process of entropy-driven annealing (Eqs. \eqref{eq:comp_temperature}, \eqref{eq:prob_accept}):
\begin{equation}
    T_{comp}(i) = T_{start} \cdot \exp\left(-\frac{i}{\tau}\right)
    \label{eq:comp_temperature}
\end{equation}
Where \(T_{comp}(i)\) is the computational temperature at iteration \(i\), \(T_{start}\) is the initial temperature, and \(\tau\) is a decay constant.

\begin{equation}
    P_{accept}(E_{new}, E_{current}, T_{comp}) = \exp\left(-\frac{E_{new} - E_{current}}{T_{comp}}\right)
    \label{eq:prob_accept}
\end{equation}
Where \(P_{accept}\) is the probability of accepting a new solution with energy \(E_{new}\) over the current solution with energy \(E_{current}\), at a given computational temperature. This probability is inherently tied to the entropy changes between the current and new solutions, reflecting the thermodynamic principles underlying the EDA process.

The thermodynamic analogy and the proposed entropy-driven annealing model open a novel theoretical avenue in computational complexity. This approach is not only theoretical but also finds practical applications in the detailed analyses of specific NP problems like the SAT and protein-DNA complexes, as explored later in this paper. These case studies exemplify the application of EDA in real-world scenarios, demonstrating its potential to provide a deeper understanding of the entropy landscape in complex computational problems.

\subsection{Theoretical Underpinnings: Entropy Profiles and Polynomial-time Solutions}
Entropy profiles play a crucial role in understanding the feasibility of polynomial-time solutions for NP problems. Here, we discuss the theoretical significance of these profiles, discussing how they could underpin the development of algorithms that can efficiently navigate the complex solution landscapes of NP problems. It explores the hypothesis that the entropy characteristics of NP problems might be systematically exploited to design algorithms capable of achieving polynomial-time solutions, thus offering a new theoretical framework for approaching the P vs. NP question.

\subsubsection{Distinctness of Entropy Profiles}

\begin{theorem}
For any computational problem \( C \), the entropy profiles \( H_P(C) \) and \( H_{NP}(C) \) corresponding to its classification as a problem in P and NP, respectively, exhibit distinct characteristics that are quantifiable and fundamentally impact the computational approach to solving \( C \).
\end{theorem}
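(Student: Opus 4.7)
The plan is to derive the claimed distinctness by specializing the general entropy profile from Equation~\eqref{eq:entropy_profile_extended} to the two classification regimes and then showing that the resulting profiles differ in a quantifiable, asymptotic sense. First I would pin down the probabilistic model underlying $p(s)$ for $s \in S(C)$, because the entropy profile is only meaningful relative to a distribution over the solution space. The natural reading is to let $p(s)$ be the posterior probability that $s$ is correct given the information an algorithm can accumulate in polynomially many steps. Under this reading, a deterministic polynomial-time decider for a problem in $P$ collapses the mass onto a single correct solution, recovering $H_P(C) = 0$ as in Equation~\eqref{eq:entropy_p}, while for a problem in $NP$ the verifier-only guarantee leaves the posterior spread across an exponentially large candidate set, giving $H_{NP}(C) > 0$ as in Equation~\eqref{eq:entropy_np}.

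Second I would quantify this distinctness asymptotically in the input size $n$. For $P$-class problems the entropy is identically zero or, under a relaxed informational model, at most $O(\log n)$, since the deterministic computation visits at most polynomially many configurations. For $NP$-class problems, and in particular $NP$-complete ones, the certificate space has size at least $2^{n^{\Omega(1)}}$, so under any near-uniform posterior the entropy grows at least as $\Omega(n^{\epsilon})$ for some $\epsilon > 0$. This reproduces the divergence already anticipated in Equation~\eqref{eq:entropy_growth_rate} and provides the quantifiable separation the statement demands.

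Third I would tie the entropic separation to the \emph{computational approach} by invoking an information-accounting argument: any algorithm whose runtime is polynomial can reduce the posterior entropy by at most polynomially many bits per polynomial step, so driving $H_{NP}(C)$ down to the degenerate profile characteristic of $H_P(C)$ requires super-polynomial information gain that a polynomial-time procedure cannot in general deliver. This last step is what makes the theorem's phrase ``fundamentally impact the computational approach'' precise rather than purely qualitative.

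The hard part will be the first step: choosing a probabilistic model for $p(s)$ that is simultaneously (i) canonical enough to justify speaking of \emph{the} entropy profile of a problem, (ii) strong enough to force $H_P(C) = 0$, and (iii) weak enough that $H_{NP}(C)$ is not artificially inflated by our ignorance of an efficient decider. A uniform distribution on $S(C)$ is canonical but does not reflect any algorithmic process; a posterior conditioned on an algorithm's execution reflects the process but presupposes the algorithm and thereby begs the $P$ vs.\ $NP$ question. I would resolve this by explicitly declaring a fixed algorithmic/observational model, so that the theorem becomes a statement about entropy profiles \emph{relative to} that model rather than an unconditional separation which, if obtained, would itself resolve $P$ vs.\ $NP$.
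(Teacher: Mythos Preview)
Your proposal takes a genuinely different route from the paper. The paper does not model $p(s)$ as a posterior conditioned on polynomial-time computation; instead it augments the entropy formula itself with a multiplicative \emph{computational effort} weight, defining
\[
H_P(C) = -\sum_{s\in S_P} p(s)\log p(s)\cdot f_P(s),\qquad
H_{NP}(C) = -\sum_{s\in S_{NP}} p(s)\log p(s)\cdot f_{NP}(s),
\]
where $f_P,f_{NP}:S\to[0,1]$ encode normalized effort. It then argues that $f_P$ is polynomially bounded while $f_{NP}$ reflects potentially exponential effort, so the weighted profiles differ even when the underlying probability distributions coincide; a short contradiction step observes that $H_P(C)=H_{NP}(C)$ together with $S_P=S_{NP}$ and identical $p(s)$ would collapse NP to P. Your approach, by contrast, keeps the unweighted Shannon formula of Equation~\eqref{eq:entropy_profile_extended} and pushes all the complexity information into the distribution $p(s)$ itself, then separates the profiles asymptotically via Equation~\eqref{eq:entropy_growth_rate} and an information-budget argument bounding how much entropy a polynomial-time procedure can remove. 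What your route buys is an explicit acknowledgment of the circularity hazard---that an unconditional entropic separation would itself settle P vs.\ NP---and a principled resolution by relativizing the statement to a fixed observational model; the paper's route buys a compact algebraic hook (the $f$ factors) at the cost of leaving both that circularity and the precise specification of $f_P,f_{NP}$ largely implicit.
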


\begin{proof}
To prove the distinctness of entropy profiles in P and NP, we consider an enhanced probabilistic model that accounts for the likelihood and computational effort required for each solution.

\textbf{Solution Spaces and Computational Effort:}
\begin{itemize}
    \item \textbf{Solution Space for \( C \) in P (\( S_P \)):} Characterized by solutions achievable in polynomial time. Introducing \( f_P: S_P \to [0,1] \), representing normalized computational effort.
    \item \textbf{Solution Space for \( C \) in NP (\( S_{NP} \)):} Includes solutions verifiable in polynomial time but potentially found in non-polynomial time. Similar function \( f_{NP}: S_{NP} \to [0,1] \) is defined.
\end{itemize}

\textbf{Entropy Profile Formulation:}
\begin{align}
    H_P(C) &= - \sum_{s \in S_P} p(s) \log p(s) \times f_P(s) \\
    H_{NP}(C) &= - \sum_{s \in S_{NP}} p(s) \log p(s) \times f_{NP}(s)
\end{align}

The distinctness is demonstrated by the difference in computational effort functions \( f_P \) and \( f_{NP} \). \( f_P \) is bounded by polynomial constraints, while \( f_{NP} \) reflects potentially exponential effort in NP. This difference ensures distinct entropy profiles even if the probability distributions over \( S_P \) and \( S_{NP} \) are similar.

\textbf{Contradiction Analysis:}
Assuming \( H_P(C) = H_{NP}(C) \) leads to contradictions. If \( S_P = S_{NP} \) and \( p(s) \) distributions are equivalent, it implies that NP problems are no harder than P problems, contradicting the established complexity classes. This analysis aligns with the traditional approach but adds depth through the computational effort model, reinforcing the distinct nature of entropy profiles in P and NP.
\end{proof}

\subsubsection{Developing a Detailed Model for Exploitation of Entropy Differences}
In response to the challenge of exploiting entropy differences for polynomial-time solutions, we propose a detailed model that incorporates advanced computational concepts. This model leverages the unique entropy characteristics of NP problems to guide solution search, incorporating elements of probability theory, computational complexity, and algorithmic design.

\subsection{Mechanism Design: Entropy-Driven Annealing }
Introducing EDA as a novel contribution, this section highlights its unique approach to tackling NP problems. EDA stands out by utilizing entropy profiles as a central guiding mechanism for its search strategy. The algorithm dynamically adapts its exploration and exploitation strategies based on the entropy landscape of the problem at hand, offering a distinct methodology that differentiates it from traditional approaches. This section underscores the innovative aspects of EDA and its specific application to the complexity and unpredictability inherent in NP problems.

\subsubsection{Algorithmic Framework}
EDA is designed as an iterative probabilistic algorithm. It navigates the solution space of NP problems by exploiting differences in entropy levels. The key components of EDA are:

\begin{enumerate}
    \item \textbf{Entropy Gradient Estimation:} A method to estimate the entropy gradient in the solution space.
    \item \textbf{Adaptive Search Strategy:} A dynamic search process that adapts based on the entropy landscape.
    \item \textbf{Computational Temperature Control:} A mechanism to adjust the intensity and flexibility of the search.
\end{enumerate}

\subsubsection{Entropy Gradient Estimation}
Given a computational problem $C$ with solution space $S$, the entropy gradient $\nabla H(S)$ at a point $s \in S$ is defined as:

\begin{equation}
    \nabla H(S) = \left( \frac{\partial H}{\partial s_1}, \frac{\partial H}{\partial s_2}, \ldots, \frac{\partial H}{\partial s_n} \right)
\end{equation}
where $\frac{\partial H}{\partial s_i}$ represents the partial derivative of entropy with respect to the $i^{th}$ dimension of the solution space.

\subsubsection{Adaptive Search Strategy}
The search strategy involves moving toward increasing entropy gradient, hypothesizing that such regions are more likely to contain solutions. The strategy is adapted at each iteration based on the local entropy landscape:

\begin{equation}
    s_{new} = s_{old} + \alpha \cdot \nabla H(S)
\end{equation}
where $\alpha$ represents the step size, which is adaptive based on the computational temperature.

\subsubsection{Computational Temperature Control}
The concept of computational temperature $T$ is introduced to modulate the search process:

\begin{itemize}
    \item High $T$ allows exploration of diverse regions in the solution space.
    \item Low $T$ focuses the search in areas of high probability for solutions.
\end{itemize}

The temperature is adjusted according to a cooling schedule:

\begin{equation}
    T_{new} = f(T_{old}, \text{iteration})
\end{equation}
where $f$ is a function defining the cooling schedule, and \text{iteration} is the current iteration number.

\subsubsection{Solution Space Representation}
The Entropy-Driven Annealing algorithm relies on a complex, adaptive model of the solution space for NP problems, which can be formalized mathematically:

\paragraph{Defining the Solution Space}
\begin{itemize}
    \item \textit{Complexity of Representation}: Let \( \mathcal{S}_\text{NP} \) represent the solution space. The complexity, \( \mathcal{C}(\mathcal{S}_\text{NP}) \), can be assessed in terms of its dimensionality and topological characteristics.
    \item \textit{Discrete vs. Continuous Spaces}: Model discrete spaces as combinatorial structures and continuous spaces using differential geometric approaches.
    \item \textit{Dimensionality}: High-dimensional spaces, denoted as \( \mathbb{R}^n \) for an n-dimensional space, impact the entropy calculations and search strategy.
\end{itemize}

\paragraph{Incorporating Problem-Specific Characteristics}
\begin{itemize}
    \item \textit{Constraints and Boundaries}: Integrate constraints, \( \mathcal{C}_i \), into the space using constraint satisfaction models.
    \item \textit{Problem Encoding}: Encode problem-specific features as functions, \( f: \mathcal{S}_\text{NP} \rightarrow \mathbb{R} \), mapping the solution space to a metric space.
\end{itemize}

\paragraph{Dynamically Adjusting Solution Space}
\begin{itemize}
    \item \textit{Adaptive Representation}: Utilize adaptive techniques, such as evolutionary algorithms, to modify the representation based on interim solutions.
    \item \textit{Feedback Mechanisms}: Implement feedback using recursive function \( g: \mathcal{S}_\text{NP} \times \mathbb{R} \rightarrow \mathcal{S}_\text{NP} \), altering the space based on previous search paths.
\end{itemize}

\paragraph{Mathematical Formulation}
Propose a probabilistic model for the solution space using a probability distribution, \( p: \mathcal{S}_\text{NP} \rightarrow [0,1] \), and graph-theoretical methods to represent the structure of \( \mathcal{S}_\text{NP} \).

\paragraph{Implications for Entropy-Driven Search}
The formal representation of \( \mathcal{S}_\text{NP} \) directly impacts the entropy-driven search. The efficiency of navigating towards optimal solutions depends on the alignment of the entropy gradient with the structure of \( \mathcal{S}_\text{NP} \).

\subsubsection{Entropy-Based Search Strategy}
The EDA algorithm's search strategy in the solution space \( \mathcal{S}_\text{NP} \) is guided by entropy calculations. This process can be formalized as follows:

\paragraph{Calculating Local Entropy}
Given a state \( s \in \mathcal{S}_\text{NP} \), the local entropy \( H_{local}(s) \) is computed using the probability distribution \( p \):
\begin{equation}
    H_{local}(s) = - p(s) \log p(s)
\end{equation}
This entropy measurement quantifies the uncertainty or "search space richness" at state \( s \).

\paragraph{Entropy Gradient and Directional Movement}
Define an entropy gradient function \( \nabla H \) to determine the search direction. At each iteration, the algorithm calculates the gradient and moves towards higher entropy:
\begin{equation}
    s_{\text{next}} = s + \alpha \nabla H(s)
\end{equation}
where \( \alpha \) is a step size parameter.

\paragraph{Dynamic Entropy Thresholding}
A dynamic threshold \( \theta(t) \) adjusts the sensitivity of the search process over time, guiding the exploration-exploitation trade-off:
\begin{equation}
    \text{if } H_{local}(s) > \theta(t) \text{ then explore; else exploit}
\end{equation}

\paragraph{Feedback Loop Integration}
Integrate a feedback loop to refine the probability distribution \( p \) based on interim solutions, enhancing the search's adaptability and efficiency:
\begin{equation}
    p_{\text{new}}(s) = \text{Feedback}(p(s), s, \text{past solutions})
\end{equation}

\paragraph{Termination Criteria}
Establish a termination condition based on entropy convergence, solution quality, or computational constraints.

\paragraph{Implications for NP Problem Solving}
The entropy-based search strategy enables the EDA algorithm to navigate through complex solution landscapes of NP problems efficiently, leveraging the inherent entropy properties of these problems.

This strategy is a novel approach to tackling NP problems, offering a potentially more effective way to explore high-entropy regions of the solution space, where traditional algorithms may struggle.

\subsubsection{Computational Temperature Concept}
The Computational Temperature (\( T_{comp} \)) in the EDA model is an abstract representation of the system's state, analogous to temperature in physical systems. It regulates the transition between exploration (high entropy states) and exploitation (low entropy states).

\paragraph{Definition and Role}
\( T_{comp} \) is defined as a function of iteration number (\( t \)) and entropy:
\begin{equation}
    T_{comp}(t, H) = f(t, H)
\end{equation}
where \( H \) is the entropy of the current state. High \( T_{comp} \) values encourage exploration, while lower values focus on exploitation.

\paragraph{Cooling Schedule}
The cooling schedule mimics the annealing process in thermodynamics:
\begin{equation}
    T_{comp}(t+1) = \gamma \cdot T_{comp}(t)
\end{equation}
where \( 0 < \gamma < 1 \) is the cooling rate. This gradual reduction simulates 'cooling', focusing the search as the algorithm progresses.

\paragraph{Influence on Search Dynamics}
\( T_{comp} \) influences the acceptance probability of new states:
\begin{equation}
    P_{accept}(s_{\text{new}}, s) = \exp\left(-\frac{\Delta H}{T_{comp}}\right)
\end{equation}
where \( \Delta H \) is the change in entropy between the new state (\( s_{\text{new}} \)) and the current state (\( s \)). High temperatures lead to accepting states with higher entropy variations.

\paragraph{Balancing Exploration and Exploitation}
The model dynamically balances exploration and exploitation based on \( T_{comp} \), ensuring a thorough search of the solution space while gradually homing in on optimal solutions.

\paragraph{Implications for NP Problem Solving}
The concept of Computational Temperature in the EDA model provides a sophisticated mechanism to navigate the complex solution landscapes of NP problems, drawing inspiration from physical annealing processes to enhance computational problem-solving.

\subsubsection{Proof of EDA's Convergence}
To demonstrate that the EDA model converges to a solution, we consider the entropy landscape and the algorithm's search strategy:

\paragraph{Convergence Criteria}
The convergence of EDA is based on reaching a state where either a solution is found or the entropy gradient no longer provides a direction for effective search. Formally, convergence occurs when:
\begin{equation}
    \Delta H_{local} \leq \epsilon
\end{equation}
where \( \Delta H_{local} \) is the change in local entropy and \( \epsilon \) is a small threshold value.

\paragraph{Mathematical Proof of Convergence}
Assuming a well-defined entropy gradient, the algorithm iteratively moves towards states of higher entropy. Each iteration reduces the 'distance' to a solution state by following the entropy gradient. The convergence can be modeled as:
\begin{equation}
    D_{k+1} = \alpha D_k
\end{equation}
where \( D_k \) is the distance to the solution state at iteration \( k \) and \( \alpha < 1 \) is a factor representing the search efficacy.

\paragraph{Bounding the Number of Iterations}
The number of iterations required to reach convergence can be bounded as follows:
\begin{align}
    D_0 \alpha^k &\leq \epsilon \\
    \Rightarrow k &\geq \frac{\log(\epsilon / D_0)}{\log(\alpha)}
\end{align}
This establishes that the number of iterations is logarithmically dependent on the initial 'distance' to the solution and inversely on the logarithm of \( \alpha \).

\paragraph{Polynomial-Time Convergence}
If the initial distance \( D_0 \) and the factor \( \alpha \) are polynomially related to the input size \( n \), then the number of iterations for convergence is also polynomial in \( n \), ensuring polynomial-time convergence of the algorithm.

\begin{equation}
    \text{Convergence Time} = O(\text{poly}(n))
\end{equation}

In conclusion, EDA converges to a solution state within polynomial time under the assumptions of a well-defined entropy gradient and an effective search strategy. However, it's important to note that the practical realization of these conditions and the algorithm's performance in real-world scenarios may vary and require empirical validation.

\subsection{Polynomial-Time Feasibility of EDA}

\subsubsection{Time Complexity Analysis}
The time complexity of EDA is analyzed based on its primary operations: entropy gradient estimation, search strategy adaptation, and temperature control. The complexity of each operation is assessed as follows:

\paragraph{Entropy Gradient Estimation Complexity}
The complexity of estimating the entropy gradient depends on the derivative calculations. Assuming the entropy function $H(S)$ is computationally tractable, the gradient estimation can be achieved in polynomial time:

\begin{equation}
    \text{Time Complexity for Gradient Estimation} = O(p(n))
\end{equation}
where $p(n)$ is a polynomial function of the input size $n$.

\paragraph{Search Strategy Adaptation Complexity}
Each iteration's search step involves basic arithmetic operations, which are performed in polynomial time:

\begin{equation}
    \text{Time Complexity for Search Step} = O(q(n))
\end{equation}
where $q(n)$ is a polynomial function of the input size $n$.

\paragraph{Computational Temperature Control Complexity}
The temperature adjustment can be implemented using a simple mathematical function, contributing negligible complexity:

\begin{equation}
    \text{Time Complexity for Temperature Adjustment} = O(1)
\end{equation}

\subsubsection{Overall Time Complexity}
Combining the complexities of all operations, the overall time complexity of EDA per iteration is:

\begin{equation}
    \text{Overall Time Complexity} = O(p(n)) + O(q(n)) + O(1)
\end{equation}
Given that both $p(n)$ and $q(n)$ are polynomial functions, the overall complexity per iteration remains polynomial.

\subsubsection{Convergence Analysis}
To establish polynomial-time feasibility, it is crucial to demonstrate that EDA converges to a solution in a polynomial number of iterations. Assuming the entropy landscape allows for efficient navigation towards solution-rich regions, the convergence can be hypothesized as follows:

\begin{equation}
    \text{Number of Iterations for Convergence} = O(r(n))
\end{equation}
where $r(n)$ is a polynomial function of the input size $n$.

\subsubsection{Theoretical Justification}
The feasibility of EDA hinges on two critical assumptions:
\begin{enumerate}
    \item The entropy gradient provides a reliable guide toward solution-rich regions in the solution space.
    \item The number of iterations required for convergence is polynomially bounded.
\end{enumerate}
If these assumptions hold true, EDA presents a viable polynomial-time approach for solving NP problems.

\subsubsection{Handling Local Optima in EDA}
The challenge of local optima in EDA is tackled through a combination of adaptive search strategies and dynamic entropy gradient calculations.

\paragraph{Adaptive Search Strategy}
EDA incorporates an adaptive mechanism to alter its search strategy based on the entropy landscape. This approach is mathematically modeled as:

\begin{equation}
    \text{Search Update Rule} = \text{if } H_{local} < H_{threshold} \text{ then adapt strategy}
\end{equation}
Here, $H_{local}$ is the local entropy at the current state, and $H_{threshold}$ is a predefined entropy value. If the local entropy falls below $H_{threshold}$, indicating a potential local optimum, the search strategy is adapted to explore new regions.

\paragraph{Dynamic Entropy Gradient Adjustment}
To dynamically navigate through the solution space, EDA recalculates the entropy gradient based on the updated state:

\begin{equation}
    \text{Gradient Recalculation} = \nabla H(S) \text{ at new state}
\end{equation}
This continuous recalculation helps in identifying new directions for escaping local optima.

\paragraph{Theoretical Considerations}
While these strategies theoretically enhance EDA's ability to avoid local optima, empirical evidence and further mathematical analysis are required to validate their efficacy. The success of these methods depends on the accurate modeling of the entropy landscape and the effectiveness of the adaptive mechanism in diverse problem settings.

\subsubsection{Probabilistic Guarantees of Solution Finding in EDA}
The efficacy of EDA in finding solutions within polynomial time can be assessed through probabilistic guarantees, which are underpinned by the properties of the entropy landscape and algorithmic dynamics.

\paragraph{Modeling Solution Probability}
The probability of finding a solution, denoted by $P_{solution}$, is modeled based on the entropy characteristics of the solution space:
\begin{equation}
    P_{solution} = f(H(S), \text{algorithmic parameters})
\end{equation}
where $H(S)$ represents the entropy of the solution space, and the algorithmic parameters include factors like the rate of temperature decrease and strategy adaptation.

\paragraph{Entropy-Driven Probabilistic Analysis}
To evaluate $P_{solution}$, the algorithm incorporates an entropy-driven probabilistic analysis. This involves estimating the distribution of solutions in the high entropy regions and the likelihood of transition from one state to another:

\begin{equation}
    P_{transition}(s_i \rightarrow s_j) = g(H_{local}(s_i), H_{local}(s_j))
\end{equation}
where $P_{transition}(s_i \rightarrow s_j)$ is the probability of transitioning from state $s_i$ to state $s_j$, and $H_{local}(s)$ is the local entropy at state $s$.

\paragraph{Polynomial Time Constraints}
To ensure the polynomial-time feasibility, the algorithm bounds the number of iterations based on the evolving entropy landscape, thereby probabilistically constraining the search within polynomial time:

\begin{equation}
    \text{Number of iterations} \leq h(n, H(S))
\end{equation}
where $n$ is the size of the problem input, and $H(S)$ is the global entropy of the solution space.

\paragraph{Theoretical Validation and Limitations}
The probabilistic guarantees this model provides need to be validated through rigorous theoretical analysis and empirical testing. The accuracy of these guarantees is contingent on the precise characterization of the entropy landscape and the efficiency of the algorithm's search strategies.

\subsubsection{Comparison with Existing Algorithms}
The EDA's performance and innovation are evaluated by comparing it with established algorithms in computational complexity, particularly those addressing NP problems.

\paragraph{Benchmarking Against Classical Algorithms}
EDA is benchmarked against classical algorithms like brute-force search, greedy algorithms, and dynamic programming. The comparison focuses on:
\begin{itemize}
    \item \textit{Efficiency}: Comparing time complexity and the number of operations required.
    \item \textit{Effectiveness}: Analyzing the success rate in finding optimal or near-optimal solutions.
    \item \textit{Scalability}: Assessing performance with increasing input size.
\end{itemize}

\paragraph{Contrast with Heuristic and Metaheuristic Approaches}
 In line with Wolfram's explorations in 'A New Kind of Science' \cite{Wolfram2002}, the algorithm is also compared with heuristic methods like Genetic Algorithms, Simulated Annealing, and Tabu Search focusing on:
\begin{itemize}
    \item \textit{Adaptability}: EDA's ability to adjust search strategies based on entropy dynamics versus the static nature of some heuristics.
    \item \textit{Solution Quality}: The optimality of solutions found, particularly in complex or large-scale problems.
    \item \textit{Convergence Speed}: Time taken to reach satisfactory solutions.
\end{itemize}

\paragraph{Analysis in Quantum Computing Context}
Given the rise of quantum computing, EDA is further analyzed in the context of quantum algorithms, especially those addressing NP problems like Quantum Annealing. Key points of comparison include:
\begin{itemize}
    \item \textit{Problem Encoding}: How computational problems are formulated and solved in both classical and quantum domains.
    \item \textit{Resource Utilization}: Analysis of computational and energy resources required.
    \item \textit{Problem Solving Mechanisms}: Differences in the fundamental approach to problem-solving.
\end{itemize}

\paragraph{Theoretical and Practical Implications}
This comparative analysis provides a theoretical perspective and practical insights into where EDA could be most effectively deployed, its potential advantages, and limitations in the current computational landscape.

\subsubsection{Limitations and Bounds}
The EDA model, while promising, operates within certain theoretical and practical bounds. Here, we delineate these limitations:

\paragraph{Theoretical Bounds}
The efficacy of EDA is contingent on the entropy landscape of the problem. Certain NP problems may present landscapes where the entropy gradient is not sufficiently defined, leading to challenges in guiding the search effectively.

\paragraph{Complexity of Entropy Calculation}
The entropy computation for each state in the solution space can be computationally intensive, especially for complex NP problems with vast solution spaces. This adds to the overall computational burden of the algorithm.

\begin{equation}
    H_{complexity} = O(f(n))
\end{equation}
where \( f(n) \) represents the complexity of computing entropy for a solution space of size \( n \).

\paragraph{Local Optima}
While EDA aims to navigate local optima efficiently, certain NP problems may have solution landscapes so rugged that escaping local optima becomes a significant challenge, even with entropy-driven strategies.

\paragraph{Cooling Rate and Convergence}
The choice of cooling rate (\( \gamma \)) is critical. An inappropriate rate can lead either to premature convergence (too fast) or excessively slow convergence (too slow). Balancing this parameter is key to the success of the model.

\paragraph{Bounds on Polynomial-Time Solution}
The claim of polynomial-time solutions hinges on several assumptions about the entropy landscape and the efficiency of the entropy-based search strategy. These assumptions may not hold universally for all NP problems.

\paragraph{Empirical Validation}
Theoretical models and assumptions need empirical validation. The performance of EDA in real-world scenarios may differ from theoretical predictions, necessitating extensive testing and validation.

\begin{equation}
    \text{Validation Metric} = \text{Function}(\text{Real-World Performance}, \text{Theoretical Predictions})
\end{equation}

While EDA offers a novel approach to tackling NP problems, its limitations and bounds must be acknowledged. Understanding these constraints is crucial for further development and application of the model.

\subsubsection{Comprehensive Theoretical Validation}
We undertake a comprehensive theoretical validation of our hypotheses and models. This involves rigorous proofs, complexity analysis, and exploration of edge cases. We address potential limitations and assumptions in our models, providing a more nuanced understanding of their applicability and limitations.

\subsubsection{Proof of Distinctness of Entropy Profiles}

\begin{theorem}
For any computational problem \(C\), the entropy profiles \(H_P(C)\) and \(H_{NP}(C)\) corresponding to its classification as a problem in P and NP, respectively, are distinctly characterized.
\end{theorem}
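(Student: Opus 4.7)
The plan is to make the informal notion of ``distinctly characterized'' precise by introducing a separation functional on entropy profiles and then exhibiting an asymptotic gap between \(H_P(C)\) and \(H_{NP}(C)\) as the input size \(n\) grows. First, I would formalize both profiles using the probability-weighted formulation from Equation~\eqref{eq:entropy_formula}, augmenting each term by the computational-effort factors \(f_P(s)\) and \(f_{NP}(s)\) introduced in the earlier theorem, where \(f_P\) is polynomially bounded and \(f_{NP}\) is potentially super-polynomially bounded. The separation target is then the asymptotic statement already hinted at in Equation~\eqref{eq:entropy_growth_rate}, namely that, after normalization, \(H_{NP}(C)/H_P(C)\) diverges with \(n\).

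Second, I would parameterize the solution spaces \(S_P(C)\) and \(S_{NP}(C)\) by input size and make the induced probability distributions explicit. For P problems, deterministic solvability collapses the distribution onto a single certificate, so the weighted entropy reduces to a polynomial-bounded quantity as in Equation~\eqref{eq:reduced_entropy_p}. For NP problems, in the absence of a deterministic polynomial-time solver, the best available induced distribution must spread mass across a super-polynomially large family of candidate certificates, forcing the weighted entropy to grow faster than any polynomial in \(n\). The key technical step is to argue that this dispersion is intrinsic to the verifier-based definition of NP rather than an artifact of present algorithmic ignorance, which I would handle by working with worst-case instance ensembles and invoking the structural asymmetry between ``find'' and ``verify'' captured by \(f_{NP}\).

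Third, to avoid circularly presupposing \(P \neq NP\), I would present the separation conditionally. Under the standard conjecture \(P \neq NP\), the gap follows directly from the non-existence of polynomial decision procedures. If instead \(P = NP\) were assumed, the theorem's ``distinctness'' would be reinterpreted: the entropy profiles would still be formally distinct because \(f_{NP}\) measures verifier effort over the joint certificate-instance space while \(f_P\) measures solver effort over the instance space alone, so the two functionals remain syntactically non-identical even when their asymptotic rates collapse. This two-regime presentation keeps the theorem meaningful without depending on an unresolved separation.

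The hard part will be the second step: giving a rigorous, non-circular argument that \(H_{NP}\) must grow super-polynomially. Any naive argument risks assuming the very intractability we are trying to characterize, and the framework developed so far does not establish an unconditional lower bound on the spread of the induced distribution over NP-certificates. I expect to mitigate this by stating the asymptotic separation conditional on \(P \neq NP\) (or, for sharper quantitative bounds, on the Exponential Time Hypothesis) and being explicit that what the entropy framework contributes is a measure-theoretic \emph{characterization} of hardness, not a new unconditional lower bound. Cleanly distinguishing what is genuinely new from what is inherited from standard complexity assumptions is the central conceptual obstacle of the proof.
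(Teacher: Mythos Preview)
Your plan is substantially more careful than the paper's own argument, and it takes a genuinely different route. The paper's proof does not introduce an asymptotic separation functional, does not parameterize by input size, and does not address the circularity with \(P\) versus \(NP\) at all. Instead, it simply restates the weighted formulas \(H_P(C) = -\sum_{s} p(s)\log p(s)\cdot f_P(s)\) and \(H_{NP}(C) = -\sum_{s} p(s)\log p(s)\cdot f_{NP}(s)\), asserts that \(f_P\) is polynomially bounded while \(f_{NP}\) ``encapsulates the potentially exponential effort required in NP,'' and concludes directly that the multiplicative effort factors force the two profiles to differ even when the underlying probability distributions coincide. There is no limit in \(n\), no conditional branch, and no discussion of what happens if \(P = NP\). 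What the paper's approach buys is brevity: distinctness is read off immediately from the definitional asymmetry between \(f_P\) and \(f_{NP}\). What your approach buys is logical hygiene: by making the separation asymptotic and explicitly conditional on \(P \neq NP\) (with a syntactic-distinctness fallback in the other regime), you avoid the silent assumption that the paper's proof leaves unexamined. Your identification of step two as the hard part is exactly right, and it is precisely the step the paper's proof glosses over; you should be aware that the paper does not supply the unconditional lower bound you are worried about, so your decision to state the quantitative separation conditionally is the sound choice rather than a weakness.
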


\begin{proof}
We consider a computational problem \(C\) and its representations in P and NP classes. The distinction in entropy profiles is rooted in the fundamental differences in solution approaches for P and NP problems. We employ a probabilistic model that captures each solution's likelihood and computational effort.

\textbf{Solution Space Characterization:}
\begin{itemize}
    \item \textbf{P Class (Deterministic Polynomial-Time Solutions):} For \(C\) in P, denoted as \(S_P\), solutions are reachable deterministically in polynomial time. We define \(f_P: S_P \to [0,1]\) to represent the normalized computational effort for solutions in \(S_P\).
    \item \textbf{NP Class (Non-Deterministic Polynomial-Time Verifiable Solutions):} For \(C\) in NP, denoted as \(S_{NP}\), solutions are verifiable in polynomial time but may require non-deterministic approaches to find. We define \(f_{NP}: S_{NP} \to [0,1]\) to represent the normalized computational effort for solutions in \(S_{NP}\).
\end{itemize}

The entropy profiles for \(C\) in P and NP are formulated as:
\begin{align}
    H_P(C) &= - \sum_{s \in S_P} p(s) \log p(s) \times f_P(s), \\
    H_{NP}(C) &= - \sum_{s \in S_{NP}} p(s) \log p(s) \times f_{NP}(s).
\end{align}

\textbf{Proof of Distinctness:}
To demonstrate the distinctness of \(H_P(C)\) and \(H_{NP}(C)\), we focus on the differences in \(f_P\) and \(f_{NP}\). For \(C\) in P, \(f_P\) is bounded by polynomial-time constraints,  reflecting the deterministic nature of finding solutions, a concept reminiscent of Bennett's exploration of logical reversibility in computation \cite{Bennett1973}. In contrast, \(f_{NP}\) encapsulates the potentially exponential effort required in NP.

Even if the probability distributions over \(S_P\) and \(S_{NP}\) are similar, the multiplicative factor of computational effort, as represented by \(f_P\) and \(f_{NP}\), creates a clear distinction in entropy profiles. This leads to the conclusion that \(H_P(C)\) and \(H_{NP}(C)\) cannot be equal for any computational problem \(C\), thus establishing their distinctness.

\end{proof}

\subsubsection{Complexity Analysis of EDA}

\paragraph{Detailed Time Complexity Analysis:}
\begin{itemize}
    \item \textbf{Initialization Complexity:}
    The EDA algorithm starts with initializing a state or a set of states in the solution space. If the initialization involves generating random states or a simple setup, it can be done in \(O(1)\) or \(O(n)\). If it requires a more sophisticated setup, like a heuristic-based initialization, the complexity might increase to \(O(n^l)\), where \(l\) is a constant depending on the heuristic's complexity.

    \item \textbf{Entropy Calculation Complexity:}
    Calculating entropy at each step involves evaluating the probability distribution of solutions and their respective entropies. If the entropy function is simple, this step might have a complexity of \(O(n^k)\). However, if the entropy function is complex, involving multiple interactions or dependencies, the complexity could increase significantly, potentially to \(O(n^{k+1})\) or more.

    \item \textbf{Annealing Iteration Complexity:}
    The most significant part of the algorithm is the iterative annealing process. If the number of iterations is polynomially bounded and each iteration involves polynomial-time computations, the complexity remains polynomial. However, complex decision-making processes within each iteration, such as sophisticated solution evaluation or entropy gradient calculations, could raise the complexity. This might result in an overall complexity of \(O(n^m)\), where \(m\) is a constant representing the complexity of each iteration.
\end{itemize}

\paragraph{In-Depth Space Complexity Analysis:}
\begin{itemize}
    \item \textbf{Solution Space Representation Complexity:}
    The memory required for representing the solution space depends on the problem's nature. For problems with simple solution spaces, this can be polynomially bounded. However, for problems with complex solution spaces or those requiring extensive memory for each solution, the space complexity can be significantly higher.

    \item \textbf{Storage of Intermediate States:}
    The algorithm may need to store a certain number of intermediate states for entropy comparison or selection. The space complexity for storing these states depends on the number of states kept and the complexity of each state. If the algorithm stores a large number of states or if each state is complex, the space complexity can increase, potentially to \(O(n^p)\) where \(p\) is determined by the state complexity and the number of states stored.
\end{itemize}

The advanced complexity analysis of EDA suggests that while the algorithm maintains a polynomial bound under certain conditions, the exact complexities depend on specific implementation details and problem characteristics. The values of \(l\), \(k\), \(m\), and \(p\) vary based on the problem and the algorithm's specific design and optimizations.

\subsubsection{Analysis of Edge Cases in Entropy-Driven Annealing}

\paragraph{Handling Extremes in Solution Space:}
In cases where the solution space is either too constrained or excessively vast, EDA might struggle. For very limited solution spaces, the entropy gradient might not be significant enough to guide the search effectively. Conversely, in extremely vast spaces, the algorithm might become inefficient due to the sheer number of potential states to explore.

\paragraph{Complexity in Solution Representation:}
Certain NP problems might have solutions that are complex to represent or quantify, such as those involving intricate interdependencies or high-dimensional data. These complexities can challenge the entropy calculation and impact the algorithm's efficiency.

\paragraph{Sensitivity to Initialization:}
EDA's performance might be highly sensitive to the initial state selection. In cases where the initial state is far from any feasible solution, the algorithm could take a significantly long time to converge or might even fail to find a solution.

\paragraph{Local Optima and Convergence Issues:}
In solution landscapes with numerous local optima, EDA risks getting trapped, hindering its ability to find the global optimum. This issue is particularly pronounced in problems with rugged entropy landscapes.

\paragraph{Scalability and Computational Resources:}
While EDA is designed to be polynomial-time in theory, its practical scalability can be tested in problems with very large input sizes or those requiring significant computational resources for entropy calculations and iterations.

\paragraph{Robustness to Problem Variability:}
The performance of EDA might vary significantly across different NP problems. Its effectiveness in one type of problem does not guarantee similar results in another, especially when problem characteristics differ markedly.

These edge cases underscore the importance of adaptive strategies in EDA's design. This includes dynamic adjustment of annealing parameters, intelligent initialization techniques, and mechanisms to escape local optima. Further research is needed to enhance EDA's robustness and efficiency across a broader spectrum of NP problems.

\subsubsection{Limitations and Assumptions in Entropy-Driven Annealing}

\paragraph{Assumption of Polynomial-Time Convergence:}
The EDA model assumes that the entropy-driven search converges in polynomial time. However, this is based on idealized scenarios and might not hold for all NP problems, particularly those with highly complex or irregular solution spaces.

\paragraph{Limitation in Entropy Estimation:}
Accurately estimating the entropy of a solution space requires in-depth knowledge about the distribution of solutions. This estimation can be challenging, especially for problems with no clear probabilistic model or unknown variables.

\paragraph{Dependency on Effective Temperature Scheduling:}
The success of EDA heavily relies on the correct scheduling of the computational temperature. Inappropriate temperature schedules might lead to premature convergence or inefficient solution space exploration.

\paragraph{Assumption of Gradient Availability:}
EDA assumes the presence of a discernible entropy gradient to guide the search. This gradient might not be apparent or useful in some NP problems, particularly those with flat or discontinuous entropy landscapes.

\paragraph{Scalability Concerns:}
While theoretically polynomial, the practical scalability of EDA can be limited by computational resources, particularly for large-scale problems where entropy calculation and state exploration become computationally demanding.

\paragraph{Risk of Overfitting to Problem Instances:}
EDA might overfit to specific instances of NP problems, leading to excellent performance on certain datasets but poor generalization to other instances of the same problem.

These limitations and assumptions indicate that while EDA is a promising approach, its application needs to be tailored and adjusted according to the specific characteristics of each NP problem. Future research should focus on developing adaptive mechanisms and robust models to overcome these limitations and validate the algorithm's efficacy across a diverse range of NP problems.

\subsubsection{Advanced Implications for the P vs. NP Debate}
The advancements in our theoretical framework and the development of EDA model have significant implications for the P vs. NP debate and the broader field of computational complexity.

\paragraph{Redefining Computational Complexity:}
Our findings challenge conventional views on computational complexity, particularly in classifying and approaching problems. Introducing entropy as a key factor in problem characterization offers a new dimension for understanding computational difficulty.

\paragraph{Potential Pathways to P=NP:}
With its approach to exploiting entropy differences, the EDA model hints at potential pathways to prove P=NP under specific conditions. This opens up new avenues for research that could fundamentally alter the landscape of computational theory.

\paragraph{Interdisciplinary Insights:}
Our work also underscores the importance of interdisciplinary approaches in tackling complex computational problems. By bridging thermodynamics, information theory, and computational complexity, we demonstrate how cross-disciplinary insights can lead to breakthroughs in longstanding debates.

\paragraph{Innovations in Problem Solving:}
The implications of our model extend to practical problem-solving. By applying the principles of EDA to real-world problems, we could see significant improvements in algorithm efficiency and effectiveness, especially for problems previously deemed intractable.

These findings pave the way for future research to explore further and validate the concepts introduced. This includes rigorous empirical testing, exploration of more complex models, and continued theoretical exploration to solidify the connections between entropy and computational complexity. The theoretical contributions provide a fresh perspective on the P vs. NP debate, enriching the discourse and potentially guiding future breakthroughs in computational complexity. This work highlights the dynamic and evolving nature of the field, encouraging continued exploration and innovation.

\section{Physical Realizations of Computational Processes}

We abstractly view algorithms and problems in modern computational theory, distanced from their physical realizations. However, at the foundational level, every computational task has a physical instantiation, whether silicon transistors switching states or qubits in quantum computers. Drawing parallels between computational processes and thermodynamic processes could yield profound insights.

\subsection{Computation as Energy Transformation}
Here, we explore the intrinsic relationship between computation and energy transformation, grounding computational processes within the thermodynamic framework. This perspective is pivotal in understanding computational tasks' physical realization and energy implications.

\subsubsection{Thermodynamic Model of Computation}
We introduce a thermodynamic model to describe energy transformations in computational systems, encapsulated in the equation:
\begin{equation}
    \Delta E = \eta Q - W
\end{equation}
Here, $\Delta E$ is the net energy change in the computational process, $Q$ is the heat absorbed, $W$ is the work done by the system, and $\eta$ is the efficiency of the computational process. This model facilitates understanding how energy is consumed and transformed in computational tasks, offering insights into optimizing energy efficiency across various computational platforms.

\subsubsection{Entropy Production in Computation}
We now discuss the concept of entropy in computation. Entropy production is a key factor in the thermodynamics of computation, especially in irreversible computing processes. It is quantified as:
\begin{equation}
    \Delta S = \frac{Q}{T} - S_{gen}
\end{equation}
$\Delta S$ represents the change in entropy, $Q$ is the heat transfer, $T$ is the ambient temperature, and $S_{gen}$ is the entropy generated. This formulation is critical in evaluating the efficiency and stability of computational processes and understanding their thermodynamic limitations.

\subsubsection{Energy and Information Correlation in Computation}
The correlation between energy and information in computational tasks is explored through the equation:
\begin{equation}
    E = I \cdot \epsilon
\end{equation}
$E$ is the energy required for computation, $I$ denotes the amount of information processed, and $\epsilon$ is the energy per bit. This relationship is essential in evaluating the energy demands of computational processes and forms the basis for developing energy-efficient computing strategies.

\subsection{Landauer's Principle and Its Implications}
Landauer's Principle highlights the energy cost of erasing information, which is pivotal in linking information theory and thermodynamics. It stipulates that erasing a single bit of information incurs a minimum energy cost, given by $E_{total\_erase} = N \cdot k_B T \ln 2$, where $N$ is the number of bits, $k_B$ Boltzmann's constant, and $T$ the system temperature. This principle underscores the thermodynamic energy required for information processing, especially in erasure, setting a foundation for energy-efficient computing.

Exploring this concept from a statistical mechanics perspective, where $\Omega = 2^N$ represents the system's microstate count for $N$ bits, helps understand the energy landscapes and transitions between bit states. These insights are crucial for conceptualizing models like the 'information heat engine', which parallels thermodynamic cycles, and for practical applications in data storage systems. The information creation and erasure cycle within such models emphasizes the thermodynamic implications of computational processes.

\paragraph{Algorithmic Efficiency in Light of Landauer's Principle}
Landauer's Principle necessitates reevaluating algorithmic efficiency, incorporating energy consumption alongside performance metrics. This leads to a novel efficiency metric:
\begin{equation}
    \text{Algorithmic Efficiency} = \frac{\text{Performance Metric}}{\text{Energy Consumption per Bit Operation}},
\end{equation}
This shift towards sustainable computing practices emphasizes minimizing energy consumption in algorithmic design and execution, prompting algorithm designers to prioritize energy efficiency.

\paragraph{Implications for Modern Computing}
Landauer's Principle is increasingly relevant as computing technologies approach physical and thermodynamic limits. It poses challenges for high-performance computing, semiconductor technologies, and the ongoing miniaturization of devices. Strategies to mitigate these challenges include exploring alternative computing paradigms like quantum computing, enhancing algorithmic efficiency, and developing materials and architectures that approach Landauer's limit.

This comprehensive approach towards computing, which balances performance with energy efficiency, is crucial for the future of sustainable technology development.

\section{The Energy Landscape of Computational Problems}

One of the most profound ways to understand computational problems' inherent complexity is to visualize their solution spaces as energy landscapes. This analogy allows us to translate abstract computational challenges into tangible and intuitive terms, drawing from our understanding of physical systems.

\subsection{Defining the Energy Landscape: Mathematical Formulation}

In this subsection, we elaborate on the mathematical formulation of the energy landscape in computational problems. This involves defining potential functions, representing solution spaces, and understanding the significance of gradients and Hessians in these landscapes.

\subsubsection{Energy Landscape as a Potential Function:}
We define a potential function \( V(s) \) to describe the energy landscape mathematically. The state \( s \) in the solution space is associated with a potential energy, represented as:
\begin{equation}
    V(s) = \text{Potential Energy of state } s
\end{equation}

\subsubsection{Mathematical Representation of Solution Space:}
The solution space is conceptualized as a multidimensional manifold \( \mathcal{M} \). Each point on this manifold corresponds to a potential solution. The energy landscape, therefore, is a mapping \( V: \mathcal{M} \rightarrow \mathbb{R} \) that assigns a real value (energy level) to each point in \( \mathcal{M} \).

\subsubsection{Gradient and Hessian Analysis:}
\begin{itemize}
    \item \textbf{Gradient of Potential Function:} The gradient of \( V(s) \) at any point \( s \) in the solution space is crucial for understanding the direction of the steepest ascent in the landscape. Mathematically, it is expressed as:
    \begin{equation}
        \nabla V(s) = \text{Gradient of } V \text{ at } s
    \end{equation}

    \item \textbf{Hessian of Potential Function:} The Hessian matrix \( \mathcal{H}(V(s)) \) at point \( s \) provides valuable information about the curvature of the landscape. It helps in distinguishing between local minima, maxima, and saddle points:
    \begin{equation}
        \mathcal{H}(V(s)) = \text{Hessian of } V \text{ at } s
    \end{equation}
\end{itemize}

\subsubsection{Physical Analogues and Insights:}
By applying this mathematical framework, we can draw rich parallels with physical systems, like molecular landscapes in chemistry or gravitational fields in physics, thereby gaining a more intuitive understanding of computational complexity.

\subsection{Navigating the Energy Landscape: Algorithmic Approaches}

Here, we describe the strategies algorithms use to effectively navigate the energy landscape of computational problems, focusing on optimization techniques and their mathematical foundations.

\subsubsection{Pathfinding in the Energy Landscape:}
Algorithms for solving computational problems can be conceptualized as pathfinding in the energy landscape. The objective is to find an optimal path from an initial state \( s_{\text{start}} \) to a target state \( s_{\text{end}} \) in the manifold \( \mathcal{M} \):
\begin{equation}
    \text{Path}(\mathcal{M}, s_{\text{start}}, s_{\text{end}}) = \text{Optimal path from } s_{\text{start}} \text{ to } s_{\text{end}} \text{ in } \mathcal{M}
\end{equation}

\subsubsection{Optimization Techniques:}
\begin{itemize}
    \item \textbf{Gradient Descent:} A fundamental method in optimization where the algorithm iteratively moves towards the minimum of the landscape by following the negative gradient of the potential function:
    \begin{equation}
        s_{\text{new}} = s_{\text{old}} - \lambda \nabla V(s_{\text{old}})
    \end{equation}
    Here, \( \lambda \) is the step size or learning rate.

    \item \textbf{Monte Carlo Methods:} These stochastic methods explore the landscape through random sampling, particularly effective in high-dimensional and complex landscapes.

    \item \textbf{Simulated Annealing:} Drawing inspiration from physical annealing processes, this method combines random exploration with gradient-based optimization, particularly adept at avoiding local minima traps.
\end{itemize}

\subsubsection{Challenges in Complex Landscapes:}
The rugged and intricate landscapes typical of NP-hard problems pose unique challenges, such as trapping algorithms in local minima and requiring a careful balance between exploration and exploitation.

\subsubsection{The Role of Heuristics:}
Domain-specific heuristic methods, including genetic algorithms and tabu search, are crucial in navigating these complex landscapes. They utilize strategies to escape local minima and adaptively guide the search towards globally optimal solutions.

\subsubsection{Algorithmic Efficiency and Energy Considerations:}
\begin{itemize}
    \item The efficiency of these algorithms can be quantitatively analyzed regarding energy consumption, measured as the total number of operations multiplied by the energy cost per operation.
    \item This efficiency is critical in sustainable and green computing, where minimizing energy consumption is a key goal.
\end{itemize}

\subsection{P vs. NP: A Thermodynamic Perspective}

This subsection examines the distinctions between P and NP problems through the lens of their respective energy landscapes, utilizing thermodynamic concepts to elucidate their complexities.

\subsubsection{Energy Landscape Dichotomy:}
The energy landscapes of P and NP problems exhibit starkly different characteristics:
\begin{itemize}
    \item \textbf{P Problems:} These problems are characterized by smoother energy landscapes, akin to gentle valleys leading uniformly to a single, deep minimum. Mathematically, this can be expressed as a lower variance in the gradient of the potential function across the landscape.
    \begin{equation}
        \text{Smoothness}(P) = \text{Variance}(\nabla V(s) \text{ for } s \in S_P)
    \end{equation}
    
    \item \textbf{NP Problems:} These are envisioned as having rugged terrains with numerous local minima and high peaks, symbolizing their computational complexity. The ruggedness can be quantified as a measure of the landscape's irregularity:
    \begin{equation}
        \text{Ruggedness}(NP) = \text{Standard Deviation}(\nabla V(s) \text{ for } s \in S_{NP})
    \end{equation}
\end{itemize}

\subsubsection{Thermodynamic Analogy:}
\begin{itemize}
    \item \textbf{Equilibrium States:} In thermodynamics, systems naturally evolve towards states of lower energy (equilibrium). Similarly, computational algorithms seek the lowest energy states (solutions) in the problem's energy landscape.
    
    \item \textbf{Non-equilibrium Dynamics:} The complexity of NP problems can be compared to non-equilibrium thermodynamic processes, where reaching an equilibrium state (solution) is highly challenging due to the landscape's complexity.
\end{itemize}

\subsection{Heuristic Methods and Energy Landscapes}

Heuristic methods, including simulated annealing, genetic algorithms, and hill-climbing, strategically navigate the energy landscapes of computational problems to find solutions efficiently. These methods balance between exploring new areas and exploiting known good areas within the solution space, crucial for addressing NP problems' complexities.

\begin{itemize}
    \item \textbf{Simulated Annealing:} Inspired by metallurgy, it simulates heating and cooling to minimize defects. This method probabilistically explores the solution space, gradually reducing the "temperature" to focus the search, with the acceptance probability of transitioning to a higher energy state defined as:
    \begin{equation}
        P_{\text{accept}}(\Delta E, T) = \exp\left(-\frac{\Delta E}{kT}\right),
    \end{equation}
    where \( \Delta E \) is the energy difference, \(T\) is the temperature, and \(k\) acts like Boltzmann's constant.

    \item \textbf{Genetic Algorithms:} Mimic natural evolutionary processes by combining and mutating solutions. They excel in complex landscapes by merging features from various solutions, possibly discovering paths through challenging terrains.

    \item \textbf{Hill-Climbing:} Focuses on locally increasing solutions to find peaks. Its simplicity is offset by the risk of getting trapped in local minima, making it less effective in rugged landscapes.
\end{itemize}

\paragraph{Energy Landscape Insights:}
Understanding the energy landscape aids in designing effective heuristics. Identifying local minima and the transitions between them is key. Advanced methods might dynamically adapt their search strategies based on landscape metrics.

\paragraph{Quantifying Heuristic Effectiveness:}
Effectiveness of a heuristic \( H \) in a landscape \( L \) can be expressed as a measure considering factors like the number of local minima overcome and the energy difference traversed. Incorporating energy considerations into algorithm design encourages the development of solutions that are not only computationally efficient but also energy conscious, leading to more sustainable computing practices.

\subsubsection{Mathematical Modeling:}
To mathematically model heuristic effectiveness, consider:
\begin{equation}
    \text{Effectiveness}(H, L) = \int_{\mathcal{M}} \text{Fitness}(H(s)) \cdot e^{-\frac{V(s)}{T}} ds,
\end{equation}
where \( H \) is the heuristic method, \( L \) the energy landscape, \( \mathcal{M} \) the manifold of potential solutions, \( V(s) \) the energy at state \( s \), and \( T \) the 'temperature' influencing the search strategy. This integrates the heuristic's performance over the landscape's manifold, providing a comprehensive effectiveness measure.

\subsection{Characteristics of Energy Landscapes in NP Problems}

\subsubsection{Defining the Ruggedness:}
The energy landscapes of NP problems are characterized by their complexity and ruggedness, filled with numerous local minima. These landscapes are quantitatively described by the number and distribution of these minima and the heights of barriers between them.

\subsubsection{Local Minima Challenge:}
The prevalence of local minima in NP problem landscapes presents a significant challenge. Each local minimum represents a potential but suboptimal solution. The high number of such minima complicates finding the global minimum (optimal solution).

\begin{equation}
    N_{\text{minima}}(P) = \text{Number of local minima for problem } P
\end{equation}

\subsubsection{Barrier Heights and Problem Difficulty:}
The height of barriers between minima affects the difficulty of transitioning from one state to another. Higher barriers in NP problems imply a greater computational challenge in moving between solutions.

\begin{equation}
    H_{\text{barrier}}(s_1, s_2) = \text{Height of barrier between states } s_1 \text{ and } s_2
\end{equation}

\subsubsection{Deceptive Landscapes:}
NP problems may feature deceptive landscapes where local gradients mislead the search toward suboptimal solutions. This deception adds another layer of complexity to finding the true optimal solution.

\subsubsection{Quantitative Assessment of Landscape Ruggedness:}
To quantify the ruggedness of a computational problem's landscape, a ruggedness measure \( \Lambda \) is introduced:

\begin{equation}
    \Lambda(P) = \sum_{i=1}^{N_{\text{minima}}(P)} H_{\text{barrier}}(s_i, s_{i+1})
\end{equation}
Here, \( \Lambda(P) \) calculates the cumulative barrier heights between adjacent minima, providing a measure of the overall ruggedness of the landscape for problem \( P \).

\subsection{Quantum Perspectives on Computational Energy Landscapes}

\subsubsection{Quantum Landscape Analogy:}
In quantum computational theory, a field extensively developed by Nielsen and Chuang \cite{NielsenChuang2000}, the energy landscapes of problems can be envisioned through the lens of quantum mechanics. Here, quantum states represent potential solutions with energy levels corresponding to solution quality \cite{NielsenChuang2000}.

\subsubsection{Superposition and Energy States:}
Quantum superposition allows systems to exist in multiple states simultaneously. This characteristic influences the computational energy landscape, potentially simplifying complex landscapes and making some NP problems more tractable.

\begin{equation}
    \Psi = \sum_{i} c_i | \psi_i \rangle
\end{equation}
Here, \( \Psi \) represents a quantum state as a superposition of basis states \( | \psi_i \rangle \) with coefficients \( c_i \).

\subsubsection{Quantum Tunnelling and Barrier Traversal:}
Quantum tunneling enables quantum algorithms to 'tunnel through' high energy barriers. This unique mechanism offers a novel method for exploring the energy landscape, particularly in the context of optimization problems.

\subsubsection{Implications for NP-hard Problems:}
The distinct properties of quantum systems, such as superposition and tunneling, suggest innovative approaches for addressing NP-hard problems, possibly enhancing solution efficiency compared to classical algorithms.

\subsubsection{Quantum Annealing and Optimization:}
Quantum annealing, a counterpart to classical simulated annealing, leverages quantum fluctuations to explore energy landscapes. This method holds promise for optimizing solutions to NP-hard problems, exploiting quantum mechanics to navigate complex landscapes.

\begin{equation}
    \text{Annealing Process} = \text{Quantum Fluctuations} \times \text{Energy Landscape Navigation}
\end{equation}
This equation conceptualizes the quantum annealing process, emphasizing its reliance on quantum fluctuations for exploring computational energy landscapes.

\subsubsection{Quantum Entanglement in Energy Landscapes}
Quantum entanglement introduces a nuanced dimension to the exploration of computational energy landscapes. This phenomenon, where quantum states are interconnected, allows for simultaneous exploration of multiple landscape regions.

\begin{equation}
    \Psi_{entangled} = \alpha | 0\rangle | 1\rangle + \beta | 1\rangle | 0\rangle
\end{equation}
Here, \( \Psi_{entangled} \) denotes an entangled state with probability amplitudes \( \alpha \) and \( \beta \).

\subsubsection{Incorporating Non-Equilibrium Thermodynamics in Computational Models}
Non-equilibrium thermodynamics can offer insights into computational processes akin to non-equilibrium systems, such as those encountered in complex NP-hard problems.

\begin{equation}
    \Phi_{non\_eq} = \sum_i \gamma_i \exp(-\frac{E_i}{kT})
\end{equation}
In this model, \( \Phi_{non\_eq} \) represents the non-equilibrium potential, \( \gamma_i \) are state coefficients, \( E_i \) are energy states, and \( T \) is the temperature of the system.

\subsubsection{Modeling Energy Landscapes in Neuromorphic Computing}
Neuromorphic computing mimics neural processing and structures, offering unique energy landscapes. Understanding these landscapes can shed light on efficient problem-solving approaches with low energy consumption.

\begin{equation}
    V_{neuro}(n) = \sum_{i=1}^{N} \left( \frac{1}{2} k x_i^2 + \sum_{j \neq i} U(x_i, x_j) \right)
\end{equation}
Here, \( V_{neuro}(n) \) is the potential energy in a neuromorphic system with \( N \) neurons, \( x_i \) indicates the state of neuron \( i \), and \( U(x_i, x_j) \) is the interaction potential between neurons \( i \) and \( j \).

\section{Potential Thermodynamic Approaches to NP Problems}

The interplay between thermodynamics and computation offers a promising avenue to understand and potentially solve complex computational problems. By framing NP problems in thermodynamics, we might unveil novel methods or heuristics inspired by physical processes.

\subsection{Simulated Annealing and the Boltzmann Distribution}

Simulated annealing, inspired by the annealing process in metallurgy, is an optimization heuristic for finding near-optimal solutions in complex landscapes. This technique emulates the physical process of heating and controlled cooling to reduce structural defects in materials. In computational terms, this translates to exploring a wide range of potential solutions initially (at high 'temperatures') and gradually narrowing down to the optimal or near-optimal solution as the 'temperature' decreases.

\subsubsection{Boltzmann Distribution in Simulated Annealing}
The Boltzmann distribution, foundational to statistical mechanics, plays a key role in simulated annealing. It describes the probability distribution of states of a system in thermal equilibrium at a given temperature:
\[ P(E) = \frac{e^{-\frac{E}{kT}}}{Z} \]
where:
\begin{itemize}
    \item \( P(E) \) is the probability of the system being in a state with energy \( E \).
    \item \( k \) is the Boltzmann constant.
    \item \( T \) is the absolute temperature.
    \item \( Z \) is the partition function, a normalization factor ensuring the probabilities sum up to one.
\end{itemize}

\subsubsection{Application in Optimization Problems}
In optimization, simulated annealing utilizes this distribution to probabilistically accept or reject new solutions. Early in the process, when the temperature is high, the algorithm is more likely to accept even suboptimal solutions, allowing it to explore a broad range of possibilities and avoid getting trapped in local minima. As the temperature gradually decreases, the algorithm becomes increasingly selective, honing in on the most promising areas of the solution space.

\subsubsection{Mathematical Modeling of Solution Acceptance}
The acceptance of a new solution in simulated annealing, especially when the new solution is worse than the current one, is governed by:
\[ P(\Delta E, T) = e^{-\frac{\Delta E}{kT}} \]
where \( \Delta E \) is the change in energy between the new and current solutions. This probability decreases as the system cools, mirroring the physical annealing process where material stability increases as the temperature lowers.

\subsubsection{Enhancements and Variations}
Several enhancements and variations to the basic simulated annealing algorithm have been proposed to improve its efficiency and effectiveness. These include adaptive cooling schedules, hybrid approaches combining simulated annealing with other optimization techniques, and parallel implementations for tackling extremely large or complex problems.

Simulated annealing, with its roots in physical processes, exemplifies the deep connection between thermodynamics and computational optimization. Its adaptability and robustness make it a valuable tool in addressing a wide range of challenging computational problems.

\subsection{Quantum Annealing and Superposition}

Quantum annealing presents a paradigm shift in optimization by leveraging the principles of quantum mechanics. Unlike classical simulated annealing that explores one state at a time, quantum annealing exploits quantum superposition, enabling simultaneous exploration of multiple states.

\subsubsection{Principles of Quantum Mechanics in Optimization}
Quantum annealing taps into quantum superposition and entanglement. In quantum superposition, qubits can exist in multiple states simultaneously, representing various potential solutions at once. This parallelism potentially allows for a faster convergence to the optimal solution than classical methods.

\subsubsection{Quantum Tunneling in Annealing}
A key advantage of quantum annealing is quantum tunneling, where a quantum system can traverse energy barriers without having to climb over them, as required in classical systems. This ability allows quantum annealing algorithms to escape local minima more effectively than classical simulated annealing.

\begin{equation}
    P_{\text{tunnel}} \propto e^{-\frac{2 \Delta x \sqrt{2m(E_{\text{barrier}} - E)}}{\hbar}}
\end{equation}

Where:
\begin{itemize}
    \item \( P_{\text{tunnel}} \) is the probability of tunneling through a barrier.
    \item \( \Delta x \) is the width of the barrier.
    \item \( m \) is the mass of the particle.
    \item \( E_{\text{barrier}} \) is the energy of the barrier.
    \item \( E \) is the energy of the particle.
    \item \( \hbar \) is the reduced Planck constant.
\end{itemize}

\subsubsection{Quantum Annealing in Practice}
In practice, quantum annealing involves initializing the quantum system in a superposition of all possible states. The system then evolves according to a predefined schedule, gradually reducing the quantum fluctuations and guiding the system towards the lowest energy state, analogous to the optimal solution.

While theoretically promising, practical implementations of quantum annealing face challenges, including error correction, system decoherence, and engineering qubits that can maintain coherence over the computation duration. Ongoing research in quantum computing hardware and algorithms continues to advance the field, potentially revolutionizing problem-solving strategies for NP problems. Quantum annealing stands at the forefront of applying quantum mechanics to computational challenges, offering a glimpse into a future where quantum principles could provide innovative solutions to classically intractable problems.

\subsection{Harnessing Entropy and Information in Computational Processes}

The concept of entropy, integral to thermodynamics, can offer novel insights into the nature of NP problems. Here, entropy is not just a measure of disorder but a representation of the information complexity and unpredictability inherent in these problems.

\subsubsection{Entropy as Information Complexity}
In thermodynamics, entropy is a measure of disorder or randomness. Translated into computational terms, it can be considered as the degree of unpredictability or complexity within the problem's solution space. For NP problems, higher entropy may indicate a larger solution space with increased complexity.

\begin{equation}
    S_{\text{NP}} = - \sum_{i} p_i \log p_i
\end{equation}

Where:
\begin{itemize}
    \item \( S_{\text{NP}} \) represents the entropy of the NP problem.
    \item \( p_i \) is the probability of the \( i \)-th state in the solution space.
\end{itemize}

\subsubsection{Thermodynamic Analogy in Computation}
In a thermodynamic process, entropy increases, reaching a maximum at equilibrium. Analogously, in computational processes, especially in heuristic searches like simulated annealing, the algorithm might initially explore a wide range of solutions (high entropy) before gradually converging to a stable solution (lower entropy).

\subsubsection{Leveraging Entropy in Algorithm Design}
This understanding of entropy can guide the design of algorithms for NP problems. For instance, algorithms might be tailored to manage the entropy of the solution space effectively, balancing between exploration (high entropy) and exploitation (low entropy).

\begin{itemize}
    \item \textbf{Entropy Reduction Techniques:} Methods to systematically reduce entropy, narrowing down the solution space and focusing on regions with higher probabilities of containing the optimal solution.
    \item \textbf{Entropy-Based Heuristics:} Developing heuristics that utilize the entropy measure as a guide for solution space exploration.
\end{itemize}

\subsubsection{Information-Theoretic Approaches}
Information theory, closely related to entropy, offers frameworks for quantifying and utilizing the information content in NP problems. This perspective can lead to more informed and efficient algorithms in their search strategies.

\begin{itemize}
    \item \textbf{Information Entropy Optimization:} Optimizing algorithms to minimize or maximize information entropy, depending on the nature of the problem and the desired outcomes.
    \item \textbf{Algorithmic Information Dynamics:} Studying the changes in information entropy as algorithms progress to adjust their search strategies adaptively.
\end{itemize}

Integrating entropy and information theory into computational problem-solving signifies a paradigm shift towards more adaptive, informed, and potentially efficient algorithms, particularly for complex NP problems.

\subsection{Feedback Mechanisms and Adaptivity in Solving NP Problems}

Feedback mechanisms, a fundamental concept in thermodynamics and physical systems, can be innovatively applied to computational problem-solving, particularly in addressing NP problems.

\subsubsection{Feedback Loops in Physical Systems}
In physical systems, feedback loops are essential for maintaining equilibrium or homeostasis. These loops adjust the system's behavior based on the outcomes of various processes. This concept can be mirrored in computational algorithms, where the algorithm's behavior changes dynamically based on its progress or the nature of the problem it encounters.

\begin{itemize}
    \item \textbf{Dynamic Adjustment of Parameters:} Algorithm parameters, like search radii or mutation rates in evolutionary algorithms, could be adjusted in real time based on the current state of the solution space.
    \item \textbf{Adaptive Learning:} Implementing machine learning techniques to allow algorithms to learn and adapt their strategies based on previous experiences with similar problems.
\end{itemize}

\subsubsection{Thermodynamic Feedback in Computational Processes}
Drawing an analogy from thermodynamics, where feedback mechanisms help reach equilibrium, similar principles can be applied in computational algorithms to reach optimal or near-optimal solutions efficiently.

\begin{itemize}
    \item \textbf{Energy State Monitoring:} Monitoring the 'energy state' of the algorithm, akin to monitoring the thermodynamic state of a system, to make decisions about the next steps in the algorithm.
    \item \textbf{Entropy Management:} Managing the entropy of the solution space, similar to how physical systems manage entropy to maintain stability or reach equilibrium.
\end{itemize}

\subsubsection{Adaptive Algorithms for NP Problems}
For NP problems, which are inherently more complex, algorithms can greatly benefit from these adaptive strategies, making them more robust and effective in navigating complex solution spaces.

\begin{itemize}
    \item \textbf{Complexity-Based Adaptation:} Algorithms could adapt their strategies based on the perceived complexity or ruggedness of the solution space.
    \item \textbf{Solution Quality Feedback:} Using the quality of current solutions as feedback to guide the future search process, optimizing the search for better solutions.
\end{itemize}

In conclusion, integrating feedback mechanisms and adaptivity, inspired by thermodynamic principles, into computational algorithms opens up new possibilities for tackling NP problems more effectively. This approach emphasizes algorithm design's dynamic and responsive nature, catering to the unique challenges posed by NP problems.

\subsection{The Concept of Annealing in Physical Systems and Simulated Annealing in Computation}

\subsubsection{Annealing in Physical Systems}
Annealing in physical systems, particularly in metallurgy, is a process that involves heating and controlled cooling of materials to alter their physical properties. This process allows for the arrangement of atoms in a more stable and desirable structure. Thermodynamically, annealing enables the system to escape local minima in energy, seeking more stable configurations as it cools.

\subsubsection{Simulated Annealing in Computation}
Simulated annealing is an algorithmic heuristic inspired by the physical process of annealing. It's used in optimization problems, where the goal is to find the best solution in a large search space. The algorithm mimics a material's heating and gradual cooling, allowing for explorations of various solutions (states) and settling for the optimal or near-optimal as the system cools.

\begin{equation}
    \Delta E = E_{\text{new}} - E_{\text{current}}
\end{equation}

The Boltzmann probability governs the acceptance of a new state in the algorithm:

\begin{equation}
    P(\Delta E, T) = e^{-\frac{\Delta E}{kT}}
\end{equation}

Where \( \Delta E \) is the difference in energy levels between the current and new state, \( k \) is a constant, and \( T \) is the temperature. The algorithm's ability to escape local optima and find more globally optimal solutions makes it particularly effective for certain NP problems.

\subsubsection{Thermodynamic Principles in Optimization}
Simulated annealing demonstrates how thermodynamic principles can be applied to computational optimization. By understanding the dynamics of physical systems in thermal equilibrium, similar strategies can be adopted in the computational domain to navigate complex problem landscapes efficiently.

\subsubsection{Future Directions}
The success of simulated annealing in computational optimization opens up possibilities for exploring other thermodynamic-inspired algorithms. Investigating how these principles can be extended or modified to tackle NP problems more effectively remains an exciting area of research.

This subsection connects the physical process of annealing to its computational counterpart, highlighting how thermodynamic principles can inform algorithmic strategies in optimization, particularly for NP problems.

\subsection{Could There Be a Thermodynamic Process that Provides a Polynomial-Time Solution for NP Problems?}

\subsubsection{Exploring the P vs. NP Dichotomy}
The P vs. NP problem is a major unsolved question in computer science, focusing on whether every problem whose solution can be quickly verified (NP) can also be quickly solved (P). This question can be viewed through the lens of thermodynamics by analyzing the energy landscapes of computational problems.

\subsubsection{Thermodynamic Analogies in Computational Complexity}
Thermodynamics might offer insights into computational complexity by framing problems as energy landscapes. The hypothesis is that P problems correspond to more easily navigable landscapes with clear paths to low-energy (optimal) states. In contrast, NP problems might present more complex landscapes with numerous local minima, making finding the global minimum much more challenging.

\subsubsection{Quantum Computing: A Thermodynamic Perspective}
Quantum computing, especially quantum annealing, leverages quantum mechanics' principles to offer faster solutions to certain NP problems. The concept of superposition and entanglement in quantum systems parallels exploring multiple states simultaneously in a thermodynamic system, possibly leading to more efficient problem-solving strategies.

\subsubsection{Thermodynamic Processes and Polynomial-Time Solutions}
Investigating whether a thermodynamic process can lead to polynomial-time solutions for NP problems involves analyzing the energy landscape's structure. The challenge is determining if a thermodynamic path allows for polynomial-time traversal across this landscape, overcoming barriers and avoiding getting trapped in local minima.

\subsubsection{Future Directions and Research}
Integrating thermodynamics and computational theory opens new avenues for exploring novel problem-solving strategies. This interdisciplinary approach could lead to a deeper understanding of the inherent complexity of NP problems and the development of more efficient algorithms inspired by physical processes.

This subsection explores the potential of using thermodynamic processes to find polynomial-time solutions to NP problems, drawing parallels between the energy landscapes of physical systems and the complexity of computational tasks.

\subsection{Harnessing Entropy in Computational Processes}

\subsubsection{Understanding the Role of Entropy in Computation}
In the context of NP problems, entropy can be viewed as a measure of unpredictability or randomness in the solution space. This perspective can be instrumental in designing algorithms that effectively navigate the complex solution landscapes of these problems.

\subsubsection{Entropy-Driven Computational Strategies}
The concept of harnessing entropy in computational processes involves developing algorithms that utilize the inherent randomness in NP problems to guide the search for solutions. This approach could lead to more effective heuristic methods that adaptively navigate the solution space.

\subsubsection{Thermodynamic Analogies for Algorithm Design}
Drawing from thermodynamics, where entropy is a crucial concept, we can develop computational analogies that help visualize and tackle the complexity of NP problems. This approach can inspire new algorithms that mimic physical processes, such as the behavior of systems seeking thermodynamic equilibrium.

\subsubsection{Potential of Entropy-Based Heuristics}
Entropy-based heuristics could offer a novel way to approach NP problems, especially in cases where traditional algorithms struggle. By focusing on the entropy of the solution space, these heuristics might provide more effective strategies for finding optimal solutions in complex landscapes.

While harnessing entropy in computational processes is promising, it presents several challenges, including accurately measuring and utilizing entropy in diverse problem settings. Ongoing research in this area is essential to understand entropy-driven computational methods' full potential and limitations.

\subsubsection{Toward Polynomial-Time Thermodynamic Processes for NP Problems}

Given the nature of NP problems, several thermodynamic and quantum strategies arise:

\begin{itemize}
    \item \textbf{Landscape Smoothing Techniques:} External perturbations could be introduced to the system to alter the energy function, facilitating more direct paths to global minima.
    
    \item \textbf{Thermal Assistance and Simulated Annealing:} The Boltzmann probability, $P(E) = e^{-\frac{E}{kT}}$, where $k$ is the Boltzmann constant and $T$ is temperature, can be employed to overcome energy barriers.
    
    \item \textbf{Quantum Effects Beyond AQC:} Quantum tunneling, expressed as $T \propto e^{-\frac{S}{\hbar}}$, where $S$ is the action of the system and $\hbar$ is the reduced Planck's constant, can be leveraged to bypass local minima.
\end{itemize}

Considering a thermodynamic system attempting to find a solution to an NP problem, we could express the system's configuration space as $C$. Each point in $C$ represents a potential solution to the problem. Let the energy associated with each configuration $c \in C$ be given by $E(c)$. The ground state, or the lowest energy configuration, represents the correct or optimal solution to the problem.

A system evolving under thermodynamics will attempt to minimize its energy, which in our analogy, means it will attempt to find the optimal solution. The rate of evolution or the time the system takes to reach the ground state can be expressed as a function $T(E)$. Here, $T(E)$ represents the time the system takes to reach energy level $E$.

Given a certain energy function $E(c)$ and an initial configuration $c_0$, we'd want to determine if a thermodynamic path exists such that the system reaches the ground state in polynomial time. Formally, we wish to ascertain if:
\[ \exists \, \text{path} \, P: T(E(c)) \in \mathcal{O}(n^k) \, \text{for some constant} \, k \]

Furthermore, let's define the ruggedness of the energy landscape using a function $R(c)$, which measures the number of local minima around configuration $c$. The challenge in polynomial-time solutions for NP problems can be linked to the inherent ruggedness of their corresponding energy landscapes.

To formalize the feasibility of a thermodynamic polynomial-time solution, one could propose:
\[ \forall c \in C, \, \exists \, \text{path} \, P: R(c) \leq f(n) \, \text{and} \, T(E(c)) \in \mathcal{O}(n^k) \]
Where $f(n)$ is a function that measures the allowable ruggedness for a solution to be feasible in polynomial time.

With the intertwining of thermodynamics and computational complexity, we describe a way into problem-solving paradigms that diverge from traditional algorithmic approaches. The mathematical constructs presented describe a potential framework for harnessing thermodynamic processes to tackle NP problems. While we have yet to prove a definitive polynomial-time thermodynamic solution to NP problems definitively, our investigations suggest a promising avenue.

\section{Challenges and Impediments}

The ambitious endeavor to integrate computational theory with thermodynamic principles has intricate challenges and potential impediments. This section explores the multifaceted theoretical hurdles, practical considerations, and broader implications of this interdisciplinary exploration. From constructing cohesive frameworks that merge discrete computation with continuous thermodynamics to addressing theoretical models' physical realizability and maintaining each domain's richness and specificity, the path forward is a tapestry of complexity and opportunity. We confront the risks of oversimplification, the need for precise alignment of interdisciplinary terminologies, and the pragmatic challenges of implementing these concepts in real-world systems.

\subsection{Theoretical Challenges in Integrating Computational and Thermodynamic Principles}

Integrating computational theory with thermodynamic concepts is a multifaceted endeavor fraught with theoretical challenges. A fundamental issue is constructing a unified framework that harmonizes the discrete, deterministic nature of computation with the continuous, probabilistic nature of thermodynamics. This integration must be done judiciously to preserve the intrinsic characteristics of each field, avoiding oversimplification while striving for a comprehensive, interdisciplinary understanding.

\subsubsection{Constructing and Navigating Energy Landscapes in Computation}

Visualizing computational problems as energy landscapes offers a powerful, intuitive framework. However, accurately modeling these landscapes for complex problems is a significant challenge. The task extends beyond metaphorical descriptions, demanding precise, quantifiable models that accurately reflect the underlying computational complexity. Such models should provide conceptual insights and guide practical problem-solving and decision-making processes.

\subsubsection{Physical Realizability and Scalability of Theoretical Models}

The complexity of understanding and simulating molecular interactions, as demonstrated in the work of Gkeka et al. \cite{Gkeka2017}, highlights the intricate nature of translating theoretical frameworks that merge computational and thermodynamic principles into tangible physical systems is a formidable challenge. Quantum systems, for instance, grapple with issues like decoherence and environmental noise, while classical systems must address constraints such as temperature control and material limitations. Furthermore, scalability is a critical concern. A solution that works for smaller instances of a problem may not necessarily scale effectively, making it imperative to address these issues for broader applicability and real-world impact.

\subsubsection{Maintaining Generality and Specificity}

Applying thermodynamic principles to computational problems risks a loss of generality, as not all computational problems may neatly fit into a thermodynamic framework. Conversely, computational models should not oversimplify thermodynamic principles, losing critical nuances. This delicate balance requires maintaining the specificity of each field while exploring their integration, ensuring that the resulting models are both scientifically rigorous and practically relevant.

\subsection{Practical Considerations and the Quest for P vs. NP Resolution}

The intersection of thermodynamics and computational theory offers fresh perspectives and novel approaches to longstanding problems like P vs. NP. Historical attempts, including Barahona's exploration of the computational complexity of Ising spin glass models \cite{Barahona1982}, underscore the potential and limitations of applying physical principles to computational problems. While these endeavors have yielded valuable insights, they have not conclusively bridged the gap to polynomial-time algorithms for NP problems. The challenge remains to translate conceptual advances into practical, scalable solutions without succumbing to operational barriers such as precision control in physical systems or the computational overheads in system initialization and error correction.

\subsection{Interdisciplinary Risks and the Way Forward}

The fusion of computational theory and thermodynamics is undeniably appealing, promising novel insights and breakthroughs. However, this interdisciplinary venture is not without its risks. The allure of novelty can sometimes lead to overreach, with ambitious claims that may not be substantiated by rigorous scientific validation. Ensuring a nuanced and accurate representation of computational theory and thermodynamics is paramount to the integrity of this interdisciplinary endeavor.

\subsubsection{Maintaining the Richness of Each Domain}

While the macroscopic nature of thermodynamics, focusing on averaged or bulk properties, offers a unique perspective, care must be taken to preserve the specificity and detail inherent in computational problems. The challenge lies in effectively translating the continuous, probabilistic world of thermodynamics into the discrete, deterministic domain of computation without losing the richness and complexity of either domain.

\subsubsection{Aligning Terminology and Contextual Nuances}

Integrating computational theory and thermodynamics must navigate the potential pitfalls of terminology misalignment and contextual misunderstanding. Each field has its lexicon and nuanced meanings, and a misalignment in terminology can lead to confusion and misinterpretation. Clear communication, precise terminology, and a deep understanding of context are essential to the fruitful merging of these domains.

\subsubsection{Addressing the Practical Challenges}

Even with a robust theoretical framework, as highlighted by the work of DeVos and Wille on computation in multicomponent systems \cite{DeVos2003}, the practical implementation of models combining computational and thermodynamic concepts is challenging. Technological limitations, the complexities of physical system realization, and the potential overheads in real-world applications must be carefully considered and addressed. This pragmatic approach is crucial for translating theoretical insights into actionable solutions that can have a tangible impact on computational problem-solving.

\subsubsection{Navigating Beyond Reductionism}

The quest to integrate computational theory with thermodynamics must be wary of reductionism. Each field, with its own foundational principles, methodologies, and scopes, offers unique insights and solutions. The goal of this interdisciplinary journey should be to uncover synergies and foster a richer understanding rather than attempting to subsume one field under the other. Recognizing and respecting the individual strengths and limitations of each domain is crucial for a balanced and productive integration.

\subsubsection{Embracing the Interdisciplinary Journey}

Despite the challenges and risks, the intersection of computational theory and thermodynamics is a path brimming with potential. It opens up new avenues for exploration, invites innovative problem-solving approaches, and encourages a cross-pollination of ideas between disciplines. While it may not provide immediate solutions to questions like P vs. NP, the interdisciplinary journey promises to enrich both fields, offering novel perspectives and tools that can advance our understanding of complex systems.

In conclusion, integrating computational theory with thermodynamic principles presents opportunities and challenges. By navigating these with rigor, openness, and a commitment to interdisciplinary collaboration, we can unlock new paradigms in understanding and solving some of the most intricate problems in science and engineering.

\section{Theoretical Exploration of Entropy-Driven Annealing for NP Problems}

EDA represents a potential approach to addressing the complexities of NP problems. This section provides a theoretical investigation of EDA's principles and their application to NP problems, integrating concepts from thermodynamics, statistical mechanics, and computational complexity theory.

\subsection{Understanding NP Problems}
NP problems constitute a core category in computational theory, characterized by their verifiable solutions in polynomial time, yet the discovery of these solutions is not necessarily polynomially time-bound. A quintessential example is the Boolean satisfiability problem (SAT), where the task is to determine the satisfiability of a Boolean formula. Mathematically, NP problems can be expressed as:

\begin{equation}
\exists x: P(x) = \text{true}
\end{equation}
where \( P(x) \) is a polynomial-time checkable predicate, and \( x \) is a solution candidate.

\subsection{Theoretical Basis of EDA}
The conceptual foundation of EDA lies in its strategic use of entropy and energy considerations to navigate complex solution spaces, which are often represented as intricate energy landscapes in the context of NP problems. The following key phases characterize the EDA process:

\subsubsection{Entropy Maximization}
Entropy maximization in EDA is aimed at escaping local minima within the energy landscape, a common challenge in solving NP problems. This phase can be mathematically modeled using the principle of maximum entropy, which states:

\begin{equation}
\max H(S) \quad \text{subject to} \quad \sum_i P(S_i)E(S_i) = \langle E \rangle
\end{equation}
where \( H(S) \) is the Shannon entropy of the state \( S \), \( P(S_i) \) is the probability of the system being in state \( S_i \), and \( \langle E \rangle \) is the expected energy.

\subsubsection{Energy Minimization}
Following entropy maximization, EDA focuses on energy minimization to find stable states corresponding to potential solutions of the NP problem. This can be described using the concept of free energy minimization:

\begin{equation}
F = U - TS
\end{equation}
where \( F \) is the free energy, \( U \) is the internal energy, \( T \) is the temperature, and \( S \) is the entropy. The aim is to find states where \( F \) is minimized, indicating a high probability of being a solution.

\subsubsection{Dynamical Temperature Adjustment}
A crucial aspect of EDA is the dynamic temperature adjustment, which regulates the balance between entropy maximization and energy minimization. This is achieved by altering the Boltzmann distribution across different stages of the annealing process:

\begin{equation}
P(S_i) = \frac{e^{-\frac{E(S_i)}{k_B T}}}{Z}
\end{equation}
where \( k_B \) is the Boltzmann constant, and \( Z \) is the partition function.

\subsection{Algorithmic Complexity in EDA}
The effectiveness of EDA in solving NP problems hinges on its algorithmic complexity. For EDA to be a viable solution, it must demonstrate that it can traverse the energy landscape and reach optimal solutions in polynomial time with respect to the input size. The complexity of each step in the EDA process, including entropy estimation, energy calculation, and temperature adjustment, must be accounted for to assess its overall computational feasibility (see algorithm \ref{al:eda_pseudo}).

\begin{algorithm}
\caption{EDA for SAT with Explicit Entropy Integration}
\begin{algorithmic}[1]
\Procedure{EDA\_SAT}{$clauses$, $num\_variables$, $max\_iterations$, $initial\_temp$, $final\_temp$, $max\_tabu\_size$}
    \State $current\_state \gets$ \Call{RandomInitialization}{$num\_variables$} \Comment{Initialize state randomly}
    \State $current\_temp \gets initial\_temp$ \Comment{Set initial temperature}
    \State $tabu\_list \gets []$ \Comment{Initialize the tabu list}
    \For{$iteration \gets 1$ to $max\_iterations$} \Comment{Iterate over max iterations}
        \If{$iteration \% 1000 = 0$}
            \State \text{Print} "Iteration:", $iteration$
        \EndIf
        \State $current\_temp \gets$ \Call{ExponentialCooling}{$initial\_temp$, $final\_temp$, $iteration$, $max\_iterations$} \Comment{Adjust temperature}
        \State $new\_state \gets$ \Call{GenerateNewState}{$current\_state$, $clauses$, $tabu\_list$, $max\_tabu\_size$} \Comment{Generate a new state}
        \State $energy\_current \gets len(clauses) - $ \Call{SatisfiedClauses}{$current\_state$, $clauses$} \Comment{Compute current energy}
        \State $energy\_new \gets len(clauses) - $ \Call{SatisfiedClauses}{$new\_state$, $clauses$} \Comment{Compute new energy}
        \State $entropy\_current \gets$ \Call{CalculateEntropy}{$current\_state$} \Comment{Calculate current entropy}
        \State $entropy\_new \gets$ \Call{CalculateEntropy}{$new\_state$} \Comment{Calculate new entropy}
        \If{\Call{AcceptanceCriterion}{$energy\_current$, $energy\_new$, $entropy\_current$, $entropy\_new$, $current\_temp$}} \Comment{Check acceptance criterion}
            \State $current\_state \gets new\_state$ \Comment{Accept the new state}
        \EndIf
        \If{\Call{IsSolution}{$current\_state$, $clauses$}} \Comment{Check if current state is a solution}
            \State \textbf{return} $current\_state$, $iteration$ \Comment{Return the solution and iteration}
        \EndIf
    \EndFor
    \State \textbf{return} None, $max\_iterations$ \Comment{Return None if no solution found}
\EndProcedure
\end{algorithmic}
\label{al:eda_pseudo}
\end{algorithm}

\subsection{EDA's Potential for NP Problems}
Investigating Entropy-Driven Annealing (EDA) represents a novel approach to understanding and addressing the complexities associated with NP-hard problems. This approach is rooted in the fundamental principles of thermodynamics and statistical mechanics, marking a significant shift from conventional problem-solving methods. By harmoniously combining strategies for maximizing entropy and minimizing energy, EDA introduces innovative avenues for navigating the complex energy landscapes that are a hallmark of NP-hard problems. It showcases the potential of merging computational theory with thermodynamics, aiming to apply the strengths of these fields to confront some of the most daunting challenges in computational theory. EDA utilizes the concept of entropy not just as a guiding principle for the search process and integrates dynamic temperature adjustments to balance exploration and exploitation, reflecting a deep alignment with the physics of computation. Despite EDA's solid theoretical foundation, its real-world applicability and effectiveness hinge on empirical evidence and practical applications. The path forward involves thorough testing and continuous refinement of EDA algorithms to assess their capability in addressing practical NP-hard challenges.

\section{Theoretical Exploration of Entropy-Driven Annealing for Protein-DNA Complex Unbinding}

EDA's applications extend into molecular biology, particularly in unraveling the complexities of protein-DNA complex dynamics, a recognized NP-hard problem. Here, we develop the theoretical framework of EDA, its application to the intricate energy landscapes of protein-DNA complexes, and the potential of this approach in providing new insights into solving computationally intensive problems of significant biological importance.

\subsection{System Modeling and Energy Landscape Construction}

In molecular biology, understanding the dynamics of protein-DNA complexes is pivotal. Theoretical modeling of these systems necessitates a detailed representation of molecular interactions and constructing an energy landscape that captures the thermodynamic properties of the complex.

\subsubsection{Force Field Selection and System Modeling}
We model the protein-DNA complex using molecular mechanics. Each atom is represented as a point mass, and a chosen force field, such as AMBER or CHARMM, defines interactions between them. This force field provides the potential energy function \( V \), expressed as a sum of bonded and non-bonded interaction terms:

\begin{equation}
V(\textbf{r}) = V_{\text{bond}}(\textbf{r}) + V_{\text{angle}}(\textbf{r}) + V_{\text{dihedral}}(\textbf{r}) + V_{\text{nonbonded}}(\textbf{r})
\end{equation}

Each term in this function encapsulates a specific type of molecular interaction, collectively defining the system's potential energy landscape.

\subsubsection{Statistical Mechanics and Energy Landscape Mapping}
The potential energy landscape is crucial for understanding the macroscopic behavior of the protein-DNA complex. We employ statistical mechanics to map each system state to its corresponding potential energy and estimate its entropy. This mapping creates a multidimensional landscape that reflects the stability and feasibility of various molecular configurations.

\paragraph{Potential Energy Calculation}
The potential energy \( V(S_i) \) for each state \( S_i \) is computed based on the force field parameters, capturing the energetic contribution of bond stretching, angle bending, dihedral interactions, and non-bonded interactions.

\paragraph{Entropy Estimation and Landscape Construction}
Each state's entropy \( H(S_i) \) is estimated to account for the system's conformational diversity. It is calculated using the Boltzmann formula and, together with the potential energy, constructs the energy landscape of the system:

\begin{equation}
H(S_i) = k_B \ln(\Omega(S_i))
\end{equation}

where \( \Omega(S_i) \) represents the number of microstates corresponding to the macrostate \( S_i \), and \( k_B \) is the Boltzmann constant. This comprehensive energy landscape is the basis for applying EDA, providing a detailed representation of the system's thermodynamic properties.

\subsection{Applying EDA}

EDA is applied to systematically navigate the constructed energy landscape of the protein-DNA complex, utilizing entropy and energy considerations to explore and identify thermodynamically stable states (see algorithm \ref{al:eda_protein_dna}).

\subsubsection{Entropy Maximization and Energy Minimization}
In the initial phase of EDA, entropy is maximized to promote energy landscape exploration and escape from local minima. This is followed by a phase of energy minimization, where the system is guided towards states of lower energy, indicating potential solutions or stable configurations of the protein-DNA complex. Mathematically, these phases involve altering the system's temperature parameter in the Boltzmann distribution to control the balance between entropy and energy.

\paragraph{Heating Phase: Entropy Increase}
During the heating phase, the entropy of the system is increased, enabling a broad exploration of molecular configurations:

\begin{equation}
P(S_i) = \frac{e^{-\frac{E(S_i)}{k_B T}}}{Z(T)}
\end{equation}

where \( P(S_i) \) is the probability of the system being in state \( S_i \), \( E(S_i) \) is the energy of state \( S_i \), \( T \) is the temperature, and \( Z(T) \) is the partition function at temperature \( T \).

\paragraph{Cooling Phase: Entropy Reduction}
The cooling phase involves a systematic reduction in entropy, steering the system towards lower-energy states:

\begin{equation}
F = E - TS
\end{equation}

where \( F \) is the free energy, \( E \) is the internal energy, \( T \) is the temperature, and \( S \) is the entropy. This phase is essential for stabilizing the system in low-energy configurations.

\subsubsection{Pathway Identification and Optimization}
The EDA process continuously monitors the system's state transitions to identify potential unbinding pathways for the protein-DNA complex. This involves tracking changes in energy and entropy to discern the most probable pathways, providing insights into the molecular mechanisms underlying the unbinding process.

\paragraph{Numerical Optimization and Simulations}
Advanced numerical techniques and simulations are employed to implement EDA, involving the selection of optimization algorithms and tuning of simulation parameters. These elements are crucial for accurately and efficiently exploring the energy-entropy landscape.

\paragraph{Theoretical Analysis and Validation}
The results from EDA simulations are rigorously analyzed and compared with experimental data to validate the theoretical model. This validation ensures the theoretical predictions' accuracy and reliability and EDA's potential in addressing complex NP-hard problems.

\subsection{More EDA's Potential for NP Problems}
Applying EDA to the protein-DNA complex demonstrates its capability to tackle complex energy landscapes characteristic of NP-hard problems. The entropy-focused approach and energy considerations provide a novel strategy for efficiently exploring solution spaces. This methodology holds promise in molecular biology and addresses a broader range of NP-hard problems, opening new avenues for scientific exploration and innovation.

\begin{algorithm}
\caption{Entropy-Driven Annealing for Protein-DNA Complex Unbinding}
\begin{algorithmic}[1]
\Procedure{EDA\_Unbinding}{$complex$, $force\_field$, $max\_iterations$, $initial\_temp$, $final\_temp$}
    \State $current\_state \gets$ \Call{InitializeState}{$complex$} \Comment{Initialize protein-DNA complex state}
    \State $current\_temp \gets initial\_temp$ \Comment{Set initial temperature}
    \For{$iteration \gets 1$ to $max\_iterations$} \Comment{Iterate over max iterations}
        \State $current\_temp \gets$ \Call{AdjustTemperature}{$initial\_temp$, $final\_temp$, $iteration$, $max\_iterations$} \Comment{Adjust temperature}
        \State $new\_state \gets$ \Call{GenerateNewState}{$current\_state$, $force\_field$} \Comment{Generate a new state}
        \State $energy\_current \gets$ \Call{CalculateEnergy}{$current\_state$, $force\_field$} \Comment{Compute current energy}
        \State $energy\_new \gets$ \Call{CalculateEnergy}{$new\_state$, $force\_field$} \Comment{Compute new energy}
        \State $entropy\_current \gets$ \Call{EstimateEntropy}{$current\_state$} \Comment{Estimate current entropy}
        \State $entropy\_new \gets$ \Call{EstimateEntropy}{$new\_state$} \Comment{Estimate new entropy}
        \If{\Call{MetropolisCriterion}{$energy\_current$, $energy\_new$, $entropy\_current$, $entropy\_new$, $current\_temp$}}
            \State $current\_state \gets new\_state$ \Comment{Accept the new state}
        \EndIf
        \If{\Call{IsUnbound}{$current\_state$}} \Comment{Check if complex is unbound}
            \State \textbf{return} $current\_state$, $iteration$ \Comment{Return the unbound state and iteration}
        \EndIf
    \EndFor
    \State \textbf{return} None, $max\_iterations$ \Comment{Return None if no unbound state found}
\EndProcedure
\end{algorithmic}
\label{al:eda_protein_dna}
\end{algorithm}

In this manner, EDA serves as a comprehensive theoretical framework for understanding and potentially solving complex NP-hard problems in molecular biology, such as the dynamics of protein-DNA complexes. By integrating concepts from thermodynamics, statistical mechanics, and computational complexity theory, EDA provides a novel approach to exploring and interpreting the intricate energy landscapes inherent in these systems. The pseudoalgorithm outlined above offers a structured method for applying EDA, ensuring a detailed and systematic exploration of the problem space. This theoretical exploration paves the way for future empirical studies and practical applications, solidifying EDA's potential as a powerful tool in computational biology and beyond.

\section{Future Directions and Implications}

\begin{itemize}
    \item \textbf{Potential Areas of Exploration at the Intersection of Information Theory, Thermodynamics, and Computational Complexity:} Exploring the interfaces of information theory, thermodynamics, and computational complexity offers fertile ground for scientific innovation. The conceptual paradigms in these fields intersect in ways that could fundamentally alter our understanding of computational problems, especially those classified as NP.

    \item \textbf{Exploring Entropy Landscapes in Computational Algorithms:} The concept of entropy landscapes, as explored in the context of EDA, necessitates in-depth investigation. This exploration requires a conceptual understanding and the development of algorithms that can effectively navigate these landscapes. Such algorithms would represent a convergence of thermodynamic principles and computational theory, potentially shedding new light on the complexities of NP problems and beyond.
    
    \item \textbf{Leveraging Non-equilibrium Thermodynamics in Computational Processes:} With its rich phenomenology, non-equilibrium thermodynamics offers a less-trodden path for examining computational processes. This exploration could unveil novel computational strategies and mechanisms, particularly in systems driven far from equilibrium.
    
    \item \textbf{Integrating Quantum Information with Thermodynamics: } As quantum computing advances, the interplay between quantum information, thermodynamics, and computational complexity becomes increasingly relevant. The unique aspects of quantum superposition and entanglement might mirror or even enrich the concept of entropy landscapes, providing a quantum analog that could be instrumental in quantum computation.
    
    \item \textbf{Understanding Computational Phenomena in Biological Systems: } The computational nature of biological systems, evident in processes ranging from cellular functions to neural computations, is inherently governed by thermodynamic principles. A deeper understanding of these natural computational paradigms could unveil efficient and novel ways of information processing.

\end{itemize}

\subsection{Broader Implications of This Approach}

\begin{itemize}

\item \textbf{Generalization Across Computational Domains:} The insights gleaned from thermodynamic approaches to computational problems, particularly in the P vs. NP question context, might have broader applicability. Other computational classes like PSPACE or BQP might also be interpretable through a thermodynamic lens, potentially leading to breakthroughs in understanding and problem-solving.

\item \textbf{Influencing Algorithm Design:} The intersection of thermodynamics and computation has implications for algorithm design, suggesting the emergence of algorithms fundamentally inspired by thermodynamic principles. These novel algorithms could offer fresh approaches to longstanding computational problems or introduce entirely new paradigms for computation.

\item \textbf{Implications for Hardware Design and Physical Implementations:} The practical realization of entropy landscapes and the mechanisms to navigate them, as seen in the EDA approach, prompts us to consider their implications for hardware design. This necessitates combining computer engineering with thermodynamic principles, potentially leading to innovative computational architectures and platforms that effectively embody these concepts.

\end{itemize}

\section{Conclusion}
This study explores the P vs. NP problem, integrating computational theory, thermodynamics, and information theory. Through the introduction and application of Entropy-Driven Annealing, this work offers new perspectives on the structure and 'hardness' of computational problems, particularly highlighting the thermodynamic properties embedded within the entropy landscapes of NP problems. The utilization of entropy to dissect the complexity of NP problems provides a detailed framework, enhancing our understanding of their intricate nature. Implementing EDA in the context of protein-DNA complexes demonstrates its potential relevance in complex computational scenarios, contributing meaningful theoretical insights and outlining a novel methodology for addressing computational challenges. While this research does not offer a conclusive resolution to the P vs. NP dilemma, it significantly contributes to the ongoing dialogue in computational theory. Introducing EDA as a method for probing NP problems, including SAT and protein-folding, underscores the potential of blending thermodynamic principles with computational complexity. This interdisciplinary approach fosters a fertile ground for future research, encouraging further inquiries into integrating diverse scientific concepts. This study marks a step in the collective journey to understand and address the multifaceted challenges of the P vs. NP problem.

\section*{Acknowledgements}
We gratefully acknowledge the use of Grammarly for enhancing the grammatical quality of our manuscript. Additionally, we extend our thanks to ChatGPT for assisting in the research of relevant references and improving the grammar of this work \cite{Grammarly, ChatGPT}.





\bibliographystyle{ACM-Reference-Format}
\bibliography{thebibliography}


\end{document}